\titleformat*{\section}{\Large\bfseries} 
\titleformat*{\subsection}{\large\bfseries} 
\numberwithin{equation}{section}
\newcommand{\n}{\noindent}
\newcommand{\be}{\begin{equation}}
\newcommand{\ee}{\end{equation}}
\newcommand{\ben}{\begin{displaymath}}
\newcommand{\een}{\end{displaymath}}
\newcommand{\vs}{\vspace{0.2cm}}
\newcommand{\body}{\mathcal{B}}
\newcommand{\mass}{M}
\newcommand{\torus}{{\rm T}}
\newcommand{\dist}{{\rm dist}}
\newcommand{\bmass}{\mathcal{M}}
\newcommand{\radius}{\mathcal{R}}
\newcommand{\ssurf}{{\rm S}}
\newcommand{\hyper}{\mathcal{S}}
\newtheorem{Theorem}{Theorem}[section]
\newtheorem{Proposition}{Proposition}[section]
\newtheorem{Lemma}{Lemma}[section]
\newtheorem{Corollary}{Corollary}[section]
\newtheorem{Aux-Lemma}{Aux-Lemma}[section]
\begin{document}

\n \begin{minipage}[l]{11cm}
\n {\huge On the shape of bodies in General 

\vs
\n Relativistic regimes}

\vspace{.1cm}

\vspace{.3cm}
\n {\sc\large Martin Reiris}\\
\n {email: martin@aei.mpg.de}\\

\n \textsc{Max Planck Institute f\"ur Gravitationsphysik \\ Golm - Germany}\\

\begin{spacing}{.9}{\small The analysis of axisymmetric spacetimes, dynamical or stationary, is usually made in the reduced space. 
We prove here a {\it stability property} of the quotient space and use it together with minimal surface techniques to constraint the shape of General Relativistic bodies in terms of their energy and rotation. These constraints are different in nature to the mechanical limitations that a particular material body can have and which can forbid, for instance, rotation faster than a certain rate, (after which the body falls apart). The relations we are describing instead are fundamental and hold for all bodies, albeit they are useful only in General Relativistic regimes. For Neutron stars they are close to be optimal, and, although precise models for these stars display tighter constraints, our results are significative in that they do not depend on the equation of state.}
\end{spacing}

\vspace{.3cm}
{\sc PACS}: 04.20.-q, 04.40.Dg, 02.40.-k
\hspace{.1cm} 

Keywords: General Relativity, radius, shape, angular momentum, rotating stars. 
\end{minipage}

\section{Introduction.} The purpose of this article is twofold. The first goal, technical in nature, is to prove a fundamental {\it stability property} of the reduced space of axisymmetric spacetimes. Material bodies, if present, are assumed to satisfy the dominant energy condition only. The second motivation is to use such stability, together with minimal surfaces techniques, to investigate a plethora of relations constraining  the energy, momentum or angular momentum of bodies in terms their geometry only. These relations are different in nature to the mechanical limitations that a particular material body can have and which can forbid, for instance, rotation faster than a certain rate, (after which the body falls apart). The relations we are describing are instead fundamental and hold for all bodies, albeit their usefulness is only in General Relativistic regimes.     

Ever since the dispute between Newton and Cassini on the shape of the earth, decided victoriously on Newton's side (see \cite{MR0213075} and ref. therein), the interest on the relation between mass, rotation and the shape of bodies, never ceased. Great attention was paid in particular to the analysis of self gravitating {\it figures of equilibrium}, with the obvious importance for astronomy. A large list of conspicuous physicist and mathematicians contributed  to the study. Notoriously, Newton himself, MacLaurin, Dirichlet, Riemann, Poincar\'e and Chandrasekhar, among many others (see \cite{MR0213075} for an historical account). But all this is the realm of Newtonian mechanics. With the advent of numerical simulations, the classical lines of investigation continued vigorously into General Relativistic regimes. Mainel et al. \cite{MR2441850}, for instance, did an exhaustive study of {\it relativistic figures of equilibrium} for fluids, including systems where matter and black holes are in mutual balance. The literature on the subject is indeed very big. The survey \cite{Stergioulas:2003yp} by Stergioulas, contains a complete discussion on the subject.     

What it seems to be not properly accounted in the literature, is that the shape of these relativistic figures of equilibrium are tightly linked to their energy or their angular momentum through definite and sometimes simple rules. The relations we are referring to do not depend on the equation of state of matter and are the expression of the structure of the General theory of Relativity, rather than of the particular material model. The present work intends to investigate these relations in a more systematic way.

Many criteria on black hole formation existent in the literature can be used in the form of geometry-energy relations for bodies in static equilibrium. For instance in \cite{Bizon:1989xm}, Bizon, Malec and O'Murchadha stated an outstanding condition (an inequality), which, when it holds, proves the presence of an apparent horizon. The result applies to dynamical but spherically symmetric systems as seen over maximal and asymptotically flat slices. As a system consisting of a body resting in static equilibrium is not a black hole, then the opposite inequality must hold for it. The relations we seek in this article are relatives of this last one, but apply in dynamic situations as well and are not restricted necessarily to spherical symmetry. 

In a seminal work \cite{BHFCM}, Schoen and Yau argued somehow in the opposite direction. First they defined a geometric radius $\radius_{\rm SY}$ for bodies. This radius depends innately on the time-slice, or, loosely speaking, on the ``instant at which the body is observed''. It was then shown that if the slice is maximal, then we have
\be
\radius_{\rm SY}^{2}(\body)\leq \frac{\pi}{6\rho_{0}},
\ee
provided that $\rho-|j|\geq \rho_{0}>0$ holds on $\body$. Finally, this inequality is used to prove that, if on a certain time-slice (necessarily non-maximal) we have $\radius^{2}_{\rm SY}(\body)>\pi/6\rho_{0}$ then an apparent horizon (past or future) is present. 

The Schoen-Yau argument can be summarised as a procedure. It starts by proving a geometric inequality for bodies on maximal slices and then promote it, (by a suitable use of the Jang equation), to a black hole formation criteria (on non-maximal slices). This procedure is by now standard, (see \cite{Khuri:2009dt} and references therein). The results of this paper deal with bodies on maximal slices and could (in principle) be used to obtain criteria for black hole formation through the mentioned procedure. This is particularly relevant considering that part of this article focuses on angular momentum and that, until now, no black hole formation criteria exists based on it.   

The work of Schoen and Yau was resumed by O'Murchadha in \cite{PhysRevLett.57.2466}. There it was emphasised for the first time that the Schoen and Yau radius can be useful to constraint the size of stars in terms of their energy density. Later, Klenk \cite{Klenk} used similar ideas to study size constraints for rotating stars, although he assumed the stability of the equatorial disc which is not justified at high densities (see the {\it Discussion} in \cite{Klenk}). Recently, Dain \cite{Dain:2013gma} argued on the existence of a meaningful definition of the radius $\radius(\body)$ of bodies for which the universal relation,
\be\label{DAING}
|J(\body)|\lesssim \frac{c^{3}}{G}\radius^{2}(\body)
\ee
holds, at least on maximal slices. Some of the ideas of the present article are based in the work \cite{Dain:2013gma}. We will also validate part of it. 

The Schoen and Yau radius, or the O'Murchadha radius, were criticised fundamentally on the grounds of meaning and computability.
What do these radii measure in complicated rotating bodies, and how are they related to simple magnitudes of a body like diameter, surface-area or volume? The answer to this question depends very much on the behaviour of stable minimal discs embedded in the region enclosed by the body, and this is in general very difficult to determine and to relate. 

Nevertheless, in the context of axisymmetry, (the one on which we are going to concentrate), there is a turnaround allowing to use minimal surface techniques in a more sophisticated way and to extract relevant information of familiar geometric magnitudes. The crucial fact to realise is that the quotient space of maximal slices satisfies a stability property identical to that of stable minimal surfaces on the ambient three-space. Techniques of minimal surfaces can then be applied and geometric information of the ambient space can then be obtained. 

Let us explain briefly this fundamental fact. It will be called the {\it stability property of the reduced space} an is discussed in full extent in Section \ref{SOQ-S}.  

This article concerns the geometry of bodies $\body$ when they are observed over maximal, axisymmetric and asymptotically flat slices $\mathcal{S}$. The bodies are not assumed to be in stationary equilibrium. For simplicity we assume that $\mathcal{S}$ is diffeomorphic to $\mathbb{R}^{3}$. Consider a stable (non-necessarily axisymmetric) compact surface $\ssurf$ embedded in $\mathcal{S}$. The surface can have boundary or not. In this context,  
for all real functions $\alpha$ of compact support in the interior of $\ssurf$ we have
\be\label{SPQS}
\int_{\ssurf} \big[ |\nabla \alpha|^{2}+\kappa \alpha^{2}\big]\, dA\geq \frac{1}{2}\int_{\ssurf} \big[ |K|^{2}+16\pi\rho\big]\, \alpha^{2}\, dA
\ee
where $\kappa$ is the Gaussian curvature of the induced two-metric on $\ssurf$. 

Now, quotient the three-manifold $\hyper$ by the action of the rotational Killing field $\xi$, denote the quotient surface by $\Sigma$ and the induced (quotient)-metric by $\gamma$. What the {\it stability property of the quotient space} says is that exactly (\ref{SPQS}) holds too on $(\Sigma,\gamma)$. That is, to a formal extent the surface $(\Sigma,\gamma)$ can be considered as a stable minimal surface, although $(\Sigma,\gamma)$ is not necessarily isometric to a minimal plane embedded in $\hyper$. What is important about this property is that this ``imaginary'' $(\Sigma,\gamma)$ is easily related to the geometry of $(\hyper, g)$.

We now explain which minimal surfaces  techniques will be used to exploit the stability property of the quotient space. The most basic is the radius estimate used in \cite{BHFCM} and which is originally due to Fischer-Colbrie \cite{MR808112}. A proof of it can be found in the survey \cite{MR2483369} by Meeks, P\'erez and Ros, (Thm 2.8). In our context, the drawback of this estimate is that it is useful only when $\rho\geq \rho_{0}>0$. To get general bounds we need to use (\ref{SPQS}) with convenient trial functions. {\it Radial functions}, which by definition depend only on the geodesic distance to a point or to a closed curve, give optimal results. These type of functions have been used in the past in a long list of works\footnote{Not all of them with accurate list of references.}. This includes, Pogorelov \cite{MR630142}, Lawson and Gromov \cite{MR720933}, Kawai \cite{MR945852} and Colding and Minicozzi \cite{MR1877004}. We will use a more general result, (but a relative to the ones before), due to Castillon \cite{MR2225628} and which is based upon the work of Shiohama and Tanaka (see for instance \cite{MR2028047}). We refer again to the survey \cite{MR2483369} (Thm 2.9) for a related presentation. A simple generalisation in arXiv:1002.3274 will also be necessary.  

The next section discusses the applications of the stability property to the study of bodies that we will prove in this article. 

\subsection{Main applications.}\label{MAINAPPS}
We start with a simple application to spherically symmetric spacetimes, (which are of course axisymmetric for many possible choices of the rotational axis). We consider then a spherically symmetric body $\mathcal{B}$, a star let's say, as seen on an asymptotically flat maximal hypersurface. What we observe here is that, if the density $\rho$ is everywhere greater than $\rho_{0}$, then the area of any spherical section of the body, (i.e. the spheres of points equidistant to the centre of $\mathcal{B}$), is always less or equal than $8\pi^{2}/3\rho_{0}$. The surface of the body, in particular,  satisfies this bound.
\begin{Theorem}\label{THSP} Let $\mathcal{B}$ be a spherical body as seen on an asymptotically flat maximal slice. If the energy-density $\rho$ is greater or equal than $\rho_{0}>0$, then the area $A$ of any constant-radius sphere of $\mathcal{B}$, (in particular the area of its surface), satisfies
\be\label{ADB}
A\leq \frac{8\pi^{2}}{3\rho_{0}}.
\ee
\end{Theorem}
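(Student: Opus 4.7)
The spherically symmetric situation is a degenerate instance of axisymmetry: any axis through the centre of $\body$ is a symmetry axis. The plan is to choose one such axis, pass to the quotient $(\Sigma,\gamma)$ of the slice by the corresponding rotational Killing field $\xi$, and then invoke the stability property of the reduced space to be established in Section \ref{SOQ-S}, so that the inequality (\ref{SPQS}) holds verbatim on $(\Sigma,\gamma)$, treating it formally as a stable minimal surface.

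In spherical coordinates $(r,\theta,\phi)$ adapted to the chosen axis the slice metric takes the form $ds^{2}=A(r)^{2}dr^{2}+B(r)^{2}(d\theta^{2}+\sin^{2}\theta\, d\phi^{2})$; in particular the area of the constant-radius sphere $S_{r_{0}}$ is $4\pi B(r_{0})^{2}$, and the image of $S_{r_{0}}$ in the quotient half-plane is the meridional arc $\{r=r_{0},\,\theta\in[0,\pi]\}$ joining the axis to itself through the ``equatorial'' point $(r_{0},\pi/2)$. I would then feed into (\ref{SPQS}) a radial trial function $\alpha$ of the kind used in the Fischer--Colbrie radius estimate of \cite{BHFCM} (see also Theorem~2.8 of \cite{MR2483369}): because $\rho\geq\rho_{0}>0$ on $\body$, the right-hand side of (\ref{SPQS}) is bounded below by $8\pi\rho_{0}\int_{\Sigma}\alpha^{2}\,dA$, and a standard Castillon-type manipulation of the gradient and Gaussian curvature terms on the left yields an a~priori bound of the form $d_{\gamma}(q,\partial\Sigma)\leq C\rho_{0}^{-1/2}$ for every interior point $q\in\Sigma$, with $\partial\Sigma$ the axis.

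Applied to $q=(r_{0},\pi/2)$, this bound, combined with the explicit form of $\gamma$ in the spherically symmetric case (in which the axis-distance from the equatorial point is proportional to $B(r_{0})$), produces $B(r_{0})^{2}\leq 2\pi/(3\rho_{0})$; multiplying by $4\pi$ gives precisely (\ref{ADB}).

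The principal obstacle will be to recover the sharp numerical constant $8\pi^{2}/3$: this requires identifying the quotient metric $\gamma$ explicitly in spherical symmetry, establishing the exact proportionality between $d_{\gamma}(q,\partial\Sigma)$ and $B(r_{0})$, and pairing this with the optimal trial function in the Fischer--Colbrie estimate (the one that balances the $|\nabla\alpha|^{2}$ and $\kappa\alpha^{2}$ contributions on the meridional arc against the $\rho_{0}$-term). A secondary technical point is that the relevant radius estimate must be available for the quotient with axis boundary rather than for a stable minimal surface without boundary in the ambient space, which is exactly what the stability property of the reduced space will deliver.
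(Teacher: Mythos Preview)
Your overall architecture is exactly the paper's: pass to the quotient $(\Sigma,\gamma)$, invoke the stability property so that the Fischer--Colbrie radius estimate applies, and conclude that the $\gamma$-distance from any interior point of $\Pi(\body)$ to the boundary is at most $\sqrt{\pi/6\rho_{0}}$. Where your outline falls short is precisely the step you flag as ``the principal obstacle'', but which you treat as merely a matter of tracking constants. It is not: the lower bound $d_{\gamma}\big((r_{0},\pi/2),\partial\Sigma\big)\geq \tfrac{1}{2}B(r_{0})$ does \emph{not} follow from the explicit form of $\gamma=A(r)^{2}dr^{2}+B(r)^{2}d\theta^{2}$ alone. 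Nothing prevents $B(r)$ from dropping rapidly for $r<r_{0}$, in which case a geodesic from the equatorial point that first moves inward and then to the axis could be much shorter than $B(r_{0})$. So your claimed ``proportionality'' requires a genuine new ingredient.

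The paper supplies it via a monotonicity result of Bizon--Malec--O'Murchadha: writing $g=\phi^{4}(d\tilde r^{2}+\tilde r^{2}d\Omega^{2})$, the conformal factor $\phi(\tilde r)$ is nonincreasing. This yields, for the arc-length parameter $s$ and the areal radius $r(s)=B$, the Lipschitz bound $s_{2}-s_{1}\geq r(s_{2})-r(s_{1})$, hence $r(s)\geq r(\bar s)-(\bar s-s)$ on $(\bar s-\bar r,\bar s)$. With $x=s-(\bar s-\bar r)$ one then has $\gamma\geq dx^{2}+x^{2}d\theta^{2}$ on the half-disc $\{0<x<\bar r,\ 0<\theta<\pi\}$, and the Euclidean comparison gives $d_{\gamma}\big((\bar r/2,\pi/2),\partial\hat{\mathcal D}\big)\geq \bar r/2$. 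Fischer--Colbrie on this stable subdomain then forces $\bar r/2\leq\sqrt{\pi/6\rho_{0}}$, i.e.\ $A\leq 16\pi\mathcal R_{\rm O'Mur}^{2}\leq 8\pi^{2}/3\rho_{0}$. Without this monotonicity step (or an equivalent), your argument does not close.
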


Thus, the bigger the (uniform) density, the smaller the area of the surface of the body. It is worth noting that this result is valid in any regime, dynamical or static, and that it makes no assumption on the equation of state of the matter which is usually very speculative at the densities where the bound (\ref{ADB}) is relevant. 

Imagine for instance a star whose density is everywhere no less than $12\times 10^{17}{\rm kg/m}^{3}$. 
This density is the one that a Neutron star can have at its Inner Core and which is comparable to four times the density of an atomic nucleus. Then, according to Theorem \ref{THSP}, the areal-radius of such a star cannot be bigger than $50$ kilometres. This is just six times the radius of the Inner Core of a Neutron star (believed to be around 8 km). 
The density and radius of Neutron stars are therefore at the frontier of what can exist, independently of what the matter in the universe is made of. The close bound of 50 km and the insensitivity to the matter model, suggest that the Theorem \ref{THSP} could be relevant in situations well beyond those of only academic interest. 

The method of proof of Theorem \ref{THSP} allows us also to prove the inequality
\be\label{AOMI}
A\leq 16\pi \mathcal{R}^{2}_{\rm O'Mur}\leq \frac{8\pi^{2}}{3\rho_{0}}
\ee
where $A$ is the area of any constant-radius sphere of the body, (in particular the area of the surface), and $\mathcal{R}_{\rm O'Mur}$ is the O'Murchadha radius, (see Section \ref{PISS}). This says that at least in some situations, in particular in spherical symmetry, the O'Murchadha radius is a good measure of size. 

\vs
Theorem \ref{THSP} also gives a curious criteria for black-hole formation when the density $\rho$ is compared to the ADM-mass $M$. The following corollary to Theorem \ref{THSP} explains this fact.
\begin{Corollary}\label{COROLR}
(Under the hyp. of Thm. \ref{THSP}). If the density $\rho$ of the body is everywhere greater than $\pi/6M^{2}$, that is, if 
\be
\rho> \frac{\pi}{6M^{2}},
\ee
then the body lies entirely inside a black-hole, (past or future), and is not in static equilibrium.
\end{Corollary}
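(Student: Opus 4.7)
The plan is to combine the area bound of Theorem \ref{THSP} with the Schwarzschild structure of the vacuum exterior to produce an apparent horizon enclosing $\mathcal{B}$. First I would set $\rho_{0}=\pi/(6M^{2})$ and apply Theorem \ref{THSP}: any constant-radius sphere of $\mathcal{B}$, in particular its surface, satisfies $A\leq 8\pi^{2}/(3\rho_{0})=16\pi M^{2}$. Because the hypothesis $\rho>\pi/(6M^{2})$ is strict and $\mathcal{B}$ is compact, the infimum of $\rho$ on $\mathcal{B}$ strictly exceeds $\rho_{0}$, so the inequality is in fact strict. Writing the areal radius as $r=\sqrt{A/(4\pi)}$, this translates into $r_{\rm surf}<2M$.

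Next, I would use that the region exterior to $\mathcal{B}$ on the slice is vacuum and spherically symmetric. An initial-data version of Birkhoff's theorem identifies this exterior, with its induced metric and extrinsic curvature, with a piece of a maximal spacelike slice of a Schwarzschild spacetime whose mass is the ADM mass $M$ of the slice. On such a spherically symmetric maximal slice the Misner--Sharp mass is constant in the vacuum region and equal to $M$, so the inequality $r_{\rm surf}<2M$ is precisely $2\,m_{\rm MS}>r$ at the surface of $\mathcal{B}$. Equivalently, the surface of $\mathcal{B}$ is an outer-trapped sphere of the slice.

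A standard outermost-horizon argument then yields a marginally outer trapped sphere (i.e.\ an apparent horizon) between $\mathcal{B}$ and spatial infinity; depending on the time orientation of the slice inside Schwarzschild this horizon is past or future. Hence $\mathcal{B}$ lies entirely inside a black hole. For the last assertion, if the configuration were in static equilibrium the spacetime would be static and admit a Killing field timelike on the matter; but inside an event horizon the stationary Killing field is necessarily spacelike, so its orbits cannot coincide with the worldlines of $\mathcal{B}$, a contradiction.

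The main obstacle I anticipate is the middle step: precisely formulating the Birkhoff-type identification of the vacuum spherically symmetric exterior at the initial-data level, together with the definition and monotonicity of the Misner--Sharp mass on a (possibly not time-symmetric) maximal spherically symmetric slice, so that the passage from $r_{\rm surf}<2M$ to the existence of an outer-trapped sphere, and then to an apparent horizon enclosing $\mathcal{B}$, is unambiguous and does not tacitly assume a time-reflection symmetry that general maximal slices need not possess.
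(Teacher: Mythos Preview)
Your proposal is correct and follows essentially the same route as the paper: apply Theorem \ref{THSP} to get $A(\partial\body)<16\pi M^{2}$, invoke Birkhoff so that the exterior is Schwarzschild of mass $M$, and conclude that $\partial\body$ (having areal radius $<2M$) lies inside a past or future horizon. The paper's argument is the same three-step outline, only stated in two sentences without the Misner--Sharp/outer-trapped-surface vocabulary you use to justify the last step; your worries about formalising the Birkhoff identification on a general maximal slice are exactly the details the paper leaves implicit.
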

In particular, a star in equilibrium with $\rho\geq \rho_{0}$ must have $M^{2}\leq \pi/6\rho_{0}$. As above, one can estimate this bound for an ideal Neutron star with $\rho_{0}\sim 12\times 10^{17}$kg/m$^{3}$. We get for this that the ADM-mass $M$ can be at most 16 solar masses. This is another interesting result as it does not depend on the equation of state of the matter.

Corollary \ref{COROLR} is proved as follows. If $\rho>(\pi/6M^{2})$, then $A(\partial \body)<16\pi M^{2}$ by Theorem \ref{THSP}. But outside $\body$ the spacetime is the Schwarzschild spacetime, hence, if $\partial \body$ is not inside a past or future black hole we should have $A\geq 16\pi M^{2}$ which is not the case. 

\vs
We move now to study systems that are in rotation and which are consequently non-spherically symmetric. The systems will be just axially symmetric. We analyse two different situations separately depending on whether the rotating body intersects the axis or not. We analyse first the case when the object doesn't intersect the axis and use the results to study the other situation. Both contexts are equally important.

When rotation is present the shape of bodies is set by the competition between the angular momentum, which tries to expand, and the density which tries to contract. As a result the shape of rotating bodies must be studied with three magnitudes of length. 
\begin{figure}[h]
\centering
\includegraphics[width=6.5cm,height=4.5cm]{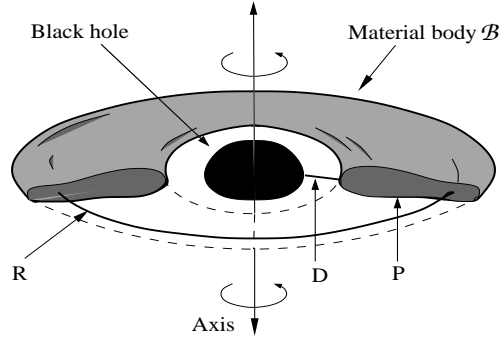}
\caption{A black-hole surrounded by a (connected) material body is schematised. The length $R$ of the greatest axisymmetric circle (which doesn't have to be the ``outermost'' one in the figure), the sectional perimeter $P$ and the length to the axis $D$ are also indicated. The length minimising geodesic from the body to the axis, whose length is $D$, enters indeed the black hole (apparent horizon).}
\label{Figure1}
\end{figure}
As said, we start with a very general geometric estimation of the angular momentum of axisymmetric (connected) bodies that do not intersect the axis. An example of one such body, actually surrounding a black-hole, is given in Figure \ref{Figure1}. Stationary systems with the same silhouette have been simulated and analysed in \cite{MR2441850}. Also in Figure \ref{Figure1} are indicated the three main magnitudes of length, $R,P$ and $D$. 

We study the geometry of the body $\body$ as seen inside a maximal, axisymmetric and asymptotically flat slice $\mathcal{S}$. For simplicity we assume that $\mathcal{S}$ is diffeomorphic to $\mathbb{R}^{3}$. The quotient of $\mathcal{S}$ by the rotational Killing field is denoted by $\Sigma$ and is (assumed to be) diffeomorphic to the half-plane. The projection from $\mathcal{S}$ into $\Sigma$ is denoted by $\Pi:\mathcal{S}\rightarrow \Sigma$.

In this context the definitions of $R,D$ and $P$ are

\begin{enumerate}
\item $R$ is the length of the greatest axisymmetric orbit (circle) in $\body$, 

\item $D$ is the distance from $\body$ to the the axis, and, 

\item $P$ is the transversal perimeter which is defined as follows. If $\body$ is connected, then $\partial(\Pi(\body))$ consists of a finite set of closed curves, one of which (and only one of which) encloses all the other. Denote such closed curve by $\partial^{e}(\Pi(\body))$. Then, the length of the smallest closed curve in $\partial \body$ and projecting into $\partial^{e}(\Pi(\body))$ is the transversal perimeter $P$. 
\end{enumerate}
The Kommar angular momentum $J$ of $\body$ is 
\be
J(\body)=\int_{\body} {\bf T}(n,\xi)\, dV
\ee 
where ${\bf T}$ is the stress-energy tensor, $n$ is a unit timelike normal to $\Sigma$, and $\xi$ is the rotational Killing field. Throughout this article we will use this notion of angular momentum.

\vs
The following is our most general result for bodies not intersecting the axis.

\begin{Theorem}\label{AST1} Let $\body$ be an axisymmetric body as seen on an asymptotically flat maximal slice. If $\body$ does not intersect the axis and is connected, then the angular momentum $|J|$ of $\body$ satisfies, 
\be\label{EEE}
8\pi\, |J|\leq \bigg(1+\frac{P}{\pi D}\bigg)\, R^{2}.
\ee
\end{Theorem}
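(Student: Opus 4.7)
The plan is to combine the dominant energy condition with the stability property (\ref{SPQS}) on the quotient plane $(\Sigma,\gamma)$, exploiting an appropriately chosen radial test function. First I would reduce $|J(\body)|$ to an integral of $\rho$ over $\Pi(\body)$; then use the stability inequality to convert that integral into a two-dimensional geometric bound; finally exhibit a test function realising the desired constants $2\pi$ and $2P/D$.

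For the reduction, since $\xi$ is tangent to the slice and orthogonal to the unit timelike normal $n$, the dominant energy condition gives $|T(n,\xi)|\leq \rho\,|\xi|$ pointwise. Every axisymmetric orbit in $\body$ has length $2\pi|\xi|\leq R$, and for axisymmetric integrands $dV=2\pi|\xi|\,dA_{\gamma}$, so
$$
|J(\body)| \;\leq\; \int_{\body}\rho\,|\xi|\,dV \;=\; 2\pi\int_{\Pi(\body)}\rho\,|\xi|^{2}\,dA_{\gamma} \;\leq\; \frac{R^{2}}{2\pi}\int_{\Pi(\body)}\rho\,dA_{\gamma}.
$$
Next, for any Lipschitz $\alpha$ of compact support on $\Sigma$ with $\alpha\equiv 1$ on $\Pi(\body)$, applying (\ref{SPQS}) on $(\Sigma,\gamma)$ gives
$$
8\pi\int_{\Pi(\body)}\rho\,dA_{\gamma} \;\leq\; \int_{\Sigma}\bigl(|\nabla\alpha|^{2}+\kappa\,\alpha^{2}\bigr)\,dA_{\gamma}.
$$
Combining the two displays, the theorem reduces to producing $\alpha$ satisfying $\int_{\Sigma}(|\nabla\alpha|^{2}+\kappa\alpha^{2})\,dA_{\gamma}\leq 2\pi+2P/D$.

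The test function I propose is a composition of two radial profiles. On the axis-collar $\{0\leq s\leq D\}$, where $s(x)=\dist_{\gamma}(x,\Gamma)$ and $\Gamma$ is the axis, set $\alpha(x)=s(x)/D$; then $|\nabla\alpha|^{2}=1/D^{2}$, the collar area is bounded by the $\gamma$-length of the axis-facing part of $\partial^{e}(\Pi(\body))$ times $D$, and this length is at most $P$ because the projection $\Pi$ cannot increase lengths of curves. This yields $\int|\nabla\alpha|^{2}\,dA_{\gamma}\leq P/D$, with factor-of-$2$ slack to absorb curvature corrections to the flat-strip area estimate. On $\Pi(\body)$ itself set $\alpha\equiv 1$, and on the outer annular region decay $\alpha$ to zero with a Castillon/Shiohama--Tanaka logarithmic radial profile on the asymptotically flat end of $(\Sigma,\gamma)$, whose contribution to $\int(|\nabla\alpha|^{2}+\kappa\alpha^{2})$ approaches $2\pi$ in the limit via a Cohn--Vossen/Gauss--Bonnet accounting on the (simply connected) support of $\alpha$.

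The main obstacle is the simultaneous sharp execution of these two pieces of the test function, keeping the $2\pi$ from the topological/Cohn--Vossen count and the $P/D$ from the axis-collar computation each on budget, without one bleeding into the other; the radial-function method of Castillon is used precisely to balance them. A secondary subtlety is that $\Pi(\body)$ may be multiply connected when $\body$ encloses a black hole, as in Figure \ref{Figure1}; this is handled by working with the outer curve $\partial^{e}(\Pi(\body))$ (already built into the definition of $P$) and by filling in any inner holes when defining the support of $\alpha$, since both the stability inequality and the reduction to $\rho$ are insensitive to such filling.
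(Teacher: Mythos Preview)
Your reduction from $|J(\body)|$ to the target inequality
\[
\int_{\Sigma}\bigl(|\nabla\alpha|^{2}+\kappa\,\alpha^{2}\bigr)\,dA_{\gamma}\ \leq\ 2\pi+\frac{2P}{D}
\]
is correct and coincides with the paper's first step.

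The construction of the test function, however, has a genuine gap. Your inner profile $\alpha(x)=s(x)/D$, with $s=\dist_{\gamma}(\cdot,\Gamma)$, lives on the strip $\{0\leq s\leq D\}$. But the axis $\Gamma=\partial\Sigma$ is an unbounded curve, so this strip is an \emph{infinite} band along the axis and $\alpha$ is not compactly supported there; the asserted bound ``collar area $\leq P\cdot D$'' is unfounded (the collar is the full $D$-neighbourhood of the axis, not the region between $\Pi(\body)$ and the axis, and has infinite area). The accompanying decomposition of $\partial^{e}(\Pi(\body))$ into an ``axis-facing part'' and an ``outer part'' is not geometrically well defined, and your outer logarithmic cutoff would contribute additional $\int\kappa\,\alpha^{2}$ terms on an unbounded region that you have not controlled.

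The paper avoids all of this by taking the radial variable to be the distance \emph{from} the disc $\Omega$ bounded by $\partial^{e}(\Pi(\body))$, not the distance to the axis. One sets $\alpha=1$ on $\Omega$, $\alpha(p)=1-\dist_{\gamma}(p,\Omega)/D$ on the annulus $\Omega_{D}=\{0<\dist_{\gamma}(\cdot,\Omega)\leq D\}$, and $\alpha=0$ elsewhere; the support $\Omega\cup\Omega_{D}$ is then compact. On $\Omega$ Gauss--Bonnet gives $\int_{\Omega}\kappa\,dA_{\gamma}=2\pi-\tilde P'$, while on the annulus the Castillon-type radial estimate gives $\int_{\Omega_{D}}(|\nabla\alpha|^{2}+\kappa\alpha^{2})\,dA_{\gamma}\leq 2\tilde P/D+\tilde P'-A(\Omega_{D})/D^{2}$. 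The geodesic-curvature terms $\tilde P'$ cancel in the sum, yielding $2\pi+2\tilde P/D\leq 2\pi+2P/D$. Note that the $2\pi$ comes from the disc topology of $\Omega$, not from asymptotics at infinity, and that the \emph{full} perimeter $\tilde P$ enters, not an ``axis-facing'' portion.
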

This says that, fixed the distance $D$ to the axis, then an increase in $|J|$ implies an increase in $R$ or $P$. We stress that this is a completely general statement that makes no assumption on the type of matter, the profile of the density $\rho$ or the momentum-current $j$. In particular it doesn't make use of any lower bound on $\rho$ (as Theorem \ref{THSP} did). What is extraordinary about (\ref{EEE})
is that the bound on the angular momentum is strictly geometrical.

We would like to mention that a simple modification of the proof of Theorem \ref{AST1} gives also the following geometric bound on the {\it proper mass} $\bmass$, (sometimes called {\it baryonic mass} too),
\be\label{EEM}
\bmass=\int_{\body} \rho dV\leq \bigg(1+\frac{P}{\pi D}\bigg)\, \frac{R}{8}
\ee
When the gravitational binding energy is negative, the proper mass is greater or equal than the ADM-mass $M$ and (\ref{EEM}) gives also a geometric bound for $M$. Thus occurs for instance when the body is in static equilibrium, (as can be seen by integrating the (maximal) Lapse equation and using that the Lapse is less or equal than one). 

\vs
In some important situations the dependence on the transversal perimeter $P$ and the connectedness of $\body$ can be eliminated altogether. We explain this important point below.

Given an axisymmetric circle $C$ at a distance $d=\dist_{g}(C,{\rm Axis})$ from the axis, and a number $a<d$, then the set of point at a distance less than $a$ from $C$ will be denoted by $\torus(C,a)$, that is
\be
\torus(C,a)=\big\{p: {\rm dist}_{g}(p,C)<a\big\}
\ee
When the metric $g$ is almost flat then the $\torus(C,a)$ are most likely to be solid tori, that is, topologically the product of a two-disc and a circle, ($\mathbb{D}^{2}\times \mathbb{S}^{1}$). In other instances this does not have to be the case.

The next results investigate the angular momentum of a body $\body$ when one knows that it lies inside a particular region $\torus(C,a)$. Of course, this is no more than assuming some a priori global proportions on the main dimensions on the body. 
What is interesting is that this allows a complete geometric estimation of the angular momentum. An example of such a situation is represented in Figure \ref{Figure2}.
\begin{Theorem}\label{ASPP} Let $\body$ be an axisymmetric body as seen on an asymptotically flat maximal slice. If $\body\subset \torus(C,a)$, then
\be
8\pi\, |J|\leq \bigg(\frac{1}{1-a/d}\bigg)^{2} R^{2}
\ee
where $d=\dist_{g}(C,{\rm Axis})$.
\end{Theorem}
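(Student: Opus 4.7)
The plan is to apply the stability property of the quotient space, inequality (\ref{SPQS}), directly on the reduced half-plane $(\Sigma,\gamma)$, with a trial function localised to the projection of $\torus(C,a)$. This bypasses both the connectedness hypothesis and the transversal-perimeter term $P$ appearing in Theorem \ref{AST1}. Let $c=\Pi(C)$; since $a<d$, the image $\Pi(\torus(C,a))$ is contained in the open geodesic disk $B_\gamma(c,a)$, which lies away from the axis.

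Using the Killing-orbit identity $dV_g=2\pi|\xi|\,dA_\gamma$ between the volume form on $\mathcal{S}$ and the area form on $\Sigma$, combined with the dominant-energy estimate $|\mathbf{T}(n,\xi)|\leq \rho\,|\xi|$, the Komar angular momentum of $\body$ satisfies
\begin{equation*}
|J(\body)|\;\leq\;2\pi\int_{\Pi(\body)}\rho\,|\xi|^2\,dA_\gamma,
\end{equation*}
so it suffices to bound the integral on the right by $(2\pi)^{-1}(1-a/d)^{-2} R^2$. I would insert $\alpha=|\xi|\,\chi$ into (\ref{SPQS}), with $\chi\geq 0$ a smooth radial cutoff in $r=\dist_\gamma(\cdot,c)$, identically $1$ on $\Pi(\body)$ and supported in $B_\gamma(c,a)$. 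Integration by parts turns $|\nabla(|\xi|\chi)|^2$ into $|\xi|^2|\nabla\chi|^2-\chi^2|\xi|\Delta_\gamma|\xi|$, and the reduced Einstein identity for axisymmetric maximal slices (in which the combination $\kappa+|\nabla\log|\xi||^2$ is a prescribed expression in the twist potential, the extrinsic curvature, and the matter) makes $\kappa|\xi|^2-|\xi|\Delta_\gamma|\xi|$ pointwise non-positive modulo terms which reabsorb on the right-hand side of (\ref{SPQS}). The stability inequality then reduces to
\begin{equation*}
8\pi\int_{\Pi(\body)}\rho\,|\xi|^2\,dA_\gamma\;\leq\;\int_\Sigma |\xi|^2\,|\nabla\chi|^2\,dA_\gamma.
\end{equation*}

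To obtain the stated constant, I would optimise over radial $\chi$ supported in $B_\gamma(c,a)$, using the geometric comparison $\dist_\gamma(\cdot,\mathrm{Axis})\geq d-r$ on the disk together with the quantitative scaling of $|\xi|$ with the distance to the axis on the reduced space. This yields a control of the form $\max_{B_\gamma(c,a)}|\xi|\leq \frac{d}{d-a}\max_\body|\xi|$, and, together with $R=2\pi\max_\body|\xi|$ and the comparison bound for the area of $B_\gamma(c,a)$, produces the factor $(1-a/d)^{-2}$ in front of $R^2$. The principal obstacle will be the reduced-geometry identity responsible for the cancellation of $\int\chi^2|\xi|\Delta_\gamma|\xi|$ against $\int\kappa|\xi|^2\chi^2$; this is the technical content of the stability property of the quotient space (Section \ref{SOQ-S}) when applied with $|\xi|$ as weight, and it is this cancellation, rather than the choice of cutoff, that carries the sharp power $2$ in the statement.
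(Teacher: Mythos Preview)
Your proposal has a genuine gap at the central ``cancellation'' step. Inserting $\alpha=|\xi|\chi=\lambda\chi$ into the stability identity (\ref{SI}) is circular: since $\alpha\nabla\ln\lambda-\nabla\alpha=-\lambda\nabla\chi$, the last term in (\ref{SI}) becomes exactly $\int\lambda^{2}|\nabla\chi|^{2}$, and after your integration-by-parts identity $\int|\nabla(\lambda\chi)|^{2}=\int\lambda^{2}|\nabla\chi|^{2}-\int\chi^{2}\lambda\Delta\lambda$ the whole equation collapses back to (\ref{SE}) integrated against $\chi^{2}\lambda$. No inequality of the form $8\pi\int\rho\lambda^{2}\leq\int\lambda^{2}|\nabla\chi|^{2}$ is produced; the combination $\kappa\lambda^{2}-\lambda\Delta\lambda$ equals $[8\pi\rho+\tfrac{1}{2}|K|^{2}+\omega(n)^{2}/(4\lambda^{4})]\lambda^{2}\geq 0$, not $\leq 0$, so the claimed ``pointwise non-positive modulo reabsorbable terms'' is the wrong sign. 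Two further issues compound this: your cutoff is supported in $B_{\gamma}(c,a)$ while required to be $\equiv 1$ on $\Pi(\body)\subset B_{\gamma}(c,a)$, which leaves no room to decay; and the comparison $\max_{B_{\gamma}(c,a)}|\xi|\leq \tfrac{d}{d-a}\max_{\body}|\xi|$ presumes that $\lambda$ scales with the distance to the axis, which is a flat-space fact with no analogue in the curved reduced geometry.

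The paper's argument avoids all of this by not carrying $\lambda$ into the trial function. One first pulls $\lambda^{2}\leq(R/2\pi)^{2}$ out of the angular-momentum integral to obtain (\ref{MAINE}), valid for any $\alpha$ of compact support with $\alpha\geq 1$ on $\Pi(\body)$. Then one takes the \emph{cone function centred at $\Pi(C)$ with radius $L=d$}, namely $\alpha(p)=\big(1-\dist_{\gamma}(p,\Pi(C))/d\big)/(1-a/d)$; this is $\geq 1$ on $B_{\gamma}(c,a)\supset\Pi(\body)$ and vanishes at distance $d$, hence is compactly supported in ${\rm Int}(\Sigma)$. Castillon's radial estimate (\ref{CINE}) then gives $\int_{\Sigma}[|\nabla\alpha|^{2}+\kappa\alpha^{2}]\,dA\leq 2\pi/(1-a/d)^{2}$, and (\ref{MAINE}) yields the theorem. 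The factor $(1-a/d)^{-2}$ thus comes from the rescaling needed to make the cone $\geq 1$ on the inner disc, not from any comparison of $|\xi|$ with the distance to the axis.
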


Contrary to Theorem \ref{AST1} the body $\body$ in Theorem \ref{ASP} doesn't have to be connected. This is definitely an advantage.

\begin{figure}[h]
\centering
\includegraphics[width=6cm,height=5cm]{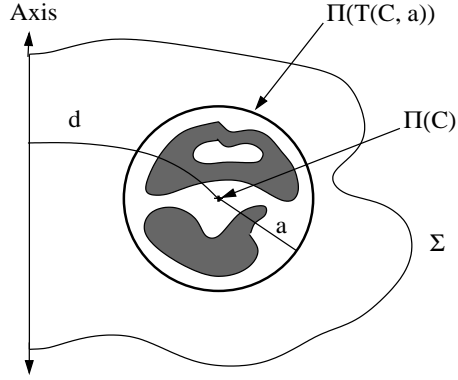}
\caption{Illustration of the main elements as in Theorem \ref{ASPP}. The are two shaded regions representing the projection of the body $\body$ to $\Sigma$, i.e. $\Pi(\body)$. In this case $\body$ has two components one of which is a solid torus.}
\label{Figure2}
\end{figure}

A particular case of Theorem \ref{ASPP} is the following Corollary \ref{ASP}.
\begin{Corollary}\label{ASP} Let $\body$ be an axisymmetric body as seen on an asymptotically flat maximal slice. If $\body\subset \torus(C,a)$ with $2a<d=\dist_{g}(C,\rm{Axis})$, then 
\be
8\pi\, |J|\leq 4R^{2}
\ee
\end{Corollary}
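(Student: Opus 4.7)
The plan is to deduce Corollary \ref{ASP} as a direct numerical specialization of Theorem \ref{ASPP}, without invoking any new geometric or analytic machinery. Since the hypothesis $\body\subset \torus(C,a)$ of Theorem \ref{ASPP} is already in place, I simply apply that theorem to obtain
\be
8\pi\,|J|\leq \left(\frac{1}{1-a/d}\right)^{2}R^{2},
\ee
where $d=\dist_g(C,{\rm Axis})$.

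The only remaining task is to dominate the dimensionless prefactor by $4$ using the extra assumption $2a<d$. First, $2a<d$ is equivalent to $a/d<1/2$, which forces $1-a/d>1/2$. Inverting yields $1/(1-a/d)<2$, and squaring gives $(1-a/d)^{-2}<4$. Substituting into the preceding display produces $8\pi|J|<4R^{2}$, which a fortiori implies the claimed non-strict bound $8\pi|J|\leq 4R^{2}$.

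There is essentially no obstacle here, since the corollary is just Theorem \ref{ASPP} specialized to the regime in which the tubular radius of $\torus(C,a)$ is smaller than half its distance to the axis. The role of the corollary is mostly presentational: it offers a clean, memorable, and dimensionless form of the angular momentum bound at the expense of a mild loss of sharpness. All of the genuine geometric work --- the stability property of the reduced space and the radial trial-function estimates based on it --- is confined to the proof of Theorem \ref{ASPP}, and nothing beyond elementary algebra is added here.
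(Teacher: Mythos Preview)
Your proof is correct and is exactly the approach the paper intends: Corollary \ref{ASP} is stated as a particular case of Theorem \ref{ASPP}, and the paper does not even spell out the elementary algebra $2a<d\Rightarrow (1-a/d)^{-2}<4$ that you carried out.
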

As in Theorem \ref{ASP}  the body $\body$ in Corollary \ref{ASPP} doesn't have to be connected.

\vs
For bodies intersecting the axis the estimations given before cannot be immediately used, ($D=0$ in this case), although they can be used to obtain bounds on the angular momentum carried by toroidal subregions of the body. However, and as we will see, an ingenious application is still possible from which information on the whole body can be obtained. To this end we ask the following question: Can a body have arbitrarily large angular momentum if it is known that its metric tensor is constrained? 

To clarify the extent of this question, let us start by analysing a very simple situation in Newtonian mechanics\footnote{In this argumentation we take some ideas from \cite{Dain:2013gma}.}. Imagine a rotating body $\body$ whose geometry is known to be that of a perfect solid sphere in Euclidean space. Suppose too that the area of its surface is $A$. If the mass density $\rho$ is constant then a straightforward computation gives
\be\label{AJN}
|J|=\bigg[\frac{\Omega M}{10\pi}\bigg]\, A,
\ee
where $M$ is the total mass and $\Omega$ is the angular velocity. As in Newtonian mechanics $M$ and $\Omega$ are unconstrained, we see from this example that $|J|$ and $A$ are in general unrelated. Non surprisingly, $|J|$ is not limited by the constraint on the geometry. However, the situation changes if we borrow from General Relativity some heuristic limitations on $M$ and $\Omega$. Thus, require that $\mathcal{R}\geq 2M$, (meaning that the system is not a black-hole; here $4\pi \mathcal{R}^{2}=A$), and require that $\mathcal{R}\Omega\leq c$, (meaning that no point in the body moves faster than the speed of light $c$). These assumptions transform (\ref{AJN}) into the suggestive inequality
\be\label{MBO}
|J|\leq \bigg[\frac{c^{3}}{20\pi G}\bigg]\, A,
\ee
where a bound for $|J|$ in terms of $A$ is explicit. This heuristic inequality is obviously applicable also to any spherical region $\body'$ of $\body$ of radius $\mathcal{R}'<\mathcal{R}$
\be
\body'=\{(r,\theta,\varphi): r\leq \mathcal{R}'\}\subset \body.
\ee
Namely, we can expect,
\be\label{MBOK}
|J(\body')|\leq \frac{c^{3}}{20\pi G}A(\partial \body').
\ee
This simple observation will be relevant for a later comparison.

Thus, on the base of (\ref{MBO}) we find it justified to ask if a similar bound can be proved within General Relativity when one knows beforehand how the geometry of the body is constrained. In this example we assumed that the body was constrained to be a perfect sphere in rotation.

In General Relativity it is not possible to assume that the metric of a body is flat because the energy constraint would imply $\rho=0$ and there would be no object after all. So let us assume that we know that the metric of the body is constrained in the following way:
\be\label{MB}
\Lambda^{-2}g_{F}\leq g\leq \Lambda^{2}g_{F}
\ee
where $g_{F}=dr^{2}+r^{2}(d\theta^{2}+\sin^{2}\theta d\varphi^{2})$ is the flat Euclidean metric of a coordinate patch $\{(r,\theta,\varphi)$, $r\leq \mathcal{R}$\} covering the body, that is
\be
\body=\{(r,\theta,\varphi): r\leq \mathcal{R}\}.
\ee
We can assume, in principle, any $\Lambda>1$ but for the sake of concreteness let us set $\Lambda=1.1=11/10$. With this choice, the body $\body$ is metrically constrained to be close to a perfect solid sphere, as we were arguing until now. Does the assumption (\ref{MB}) imply a bound in $|J|$ as in (\ref{MBO})?

Remarkably, under (\ref{MB}) only, we can prove that for any $\mathcal{R}'\leq 2\mathcal{R}/3$, the angular momentum $|J(\body')|$ carried by the (topologically) spherical region $\body'$ of $\body$, 
\be
\body'=\{(r,\theta,\varphi): r\leq \mathcal{R}'\}\ \subset \body,
\ee
is bounded by the surface-area $A(\partial \body')$ of the same region $\body'$ as 
\be\label{GEOMB}
|J(\body')|\leq \bigg[\frac{c^{3}}{20\pi G}\bigg]\, A(\partial \body'),
\ee
which is exactly what we were expecting from (\ref{MBOK}). What is striking here is that the constraint (\ref{MB}), which is just on the metric, implies (\ref{GEOMB}) without any assumption on the energy density $\rho$ or the stress-energy tensor, (we use just $0\leq |j|\leq \rho$). Observe that no reference whatsoever is made in this statement about the exterior of the body. This is all remarkable. The price paid however, is that the bound is for the angular momentum carried by the central parts of the body but not by the whole body itself. Finally note that we obtain Dain's guess (\ref{DAING}) if the area in (\ref{GEOMB}) is replaced by the areal radius.  

The bound (\ref{GEOMB}) can very well be named, {\it core estimates}. They could be relevant in the analysis of millisecond pulsars, as for them the right and left hand sides of (\ref{GEOMB}) are of the same order of magnitude.  

Let us see how the argument to prove (\ref{GEOMB}) works. The Figure \ref{Figure3} shows the projection of the body $\body$ and the subregion $\body'$ into the coordinate patch $(r,\theta)$, (we eliminated $\varphi$ when passing to the quotient). 
As can be seen also in this figure we have ideally divided $\body'$ in regions $\body'_{i}$, $i=1,2,3,4,\ldots \infty$.
To bound $|J(\body')|$ we will first use (\ref{AST1}) to estimate the angular momentum carried by each one of the regions $\body'_{i}$, and then add the contributions up. Thus, we will use
\be\label{SUM}
|J(\body')|\leq \sum_{i=1}^{i=\infty} |J(\body_{i}')|
\ee
\begin{figure}[h]
\centering
\includegraphics[width=7cm,height=6cm]{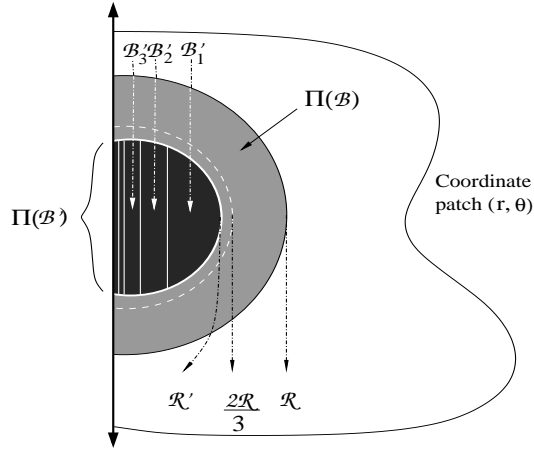}
\caption{The figure illustrates the main construction behind the proof of the inequality (\ref{GEOMB}). The projection of the body $\body$ and the interior region $\body'$ into the coordinate path $(r,\theta)$ is given. The fact that they are half discs is due only to the choice of coordinates and represents nothing geometrical.}
\label{Figure3}
\end{figure}

Each region $\Pi(\body'_{i})$ is defined as 
\be
\Pi(\body'_{i})=\Pi(\body')\cap \big\{(r,\theta): \frac{\mathcal{R}'}{2^{i}}\leq r\sin\theta \leq \frac{\mathcal{R}'}{2^{i-1}}\big\}
\ee
(note that the set on the right is just a vertical stripe of width $\mathcal{R}'/2^{i}$). Then, for each $\Pi(\body'_{i})$ we have
\begin{align}
& R_{i}\leq \frac{\Lambda \mathcal{R}'}{2^{i-1}},\\
& P_{i}\leq \Lambda (2+\pi)\mathcal{R}',\\
& D_{i}\geq \frac{\mathcal{R}'}{\Lambda 2^{i}}
\end{align}
Plugging these inequalities in (\ref{AST1}) we deduce
\be
8\pi |J(\body'_{i})|\leq \bigg[1+\frac{(2+\pi)\Lambda^{2}}{2^{i}\pi}\bigg]\frac{\Lambda^{2}\mathcal{R}'^{2}}{2^{2(i-1)}}
\ee  
and using this in (\ref{SUM}) we obtain
\be
8\pi |J(\body')|\leq \bigg[\frac{4}{3}+\frac{4(2+\pi)\Lambda^{2}}{7\pi}\bigg]\Lambda^{2}\mathcal{R}'^{2}
\ee
But as $A(\partial \body')\geq 4\pi\mathcal{R}'^{2}/\Lambda^{2}$ and $\Lambda=11/10$ we get, (after a computation),
\be
|J(\body')|\leq \frac{1}{20\pi}A(\partial \body')
\ee
as wished. 

\vs
We would like to mention that a the same reasoning, but using (\ref{EEM}) instead of (\ref{EEE}), leads to the bound
\be
\frac{\pi}{5}\bmass^{2}(\body')\leq A(\partial \body')
\ee
between the proper mass contained in $\body'$ and the area of its surface. If the binding gravitational energy is negative then $\bmass\geq \mass$ and we get $A\geq (\pi/5)M^{2}$. What these inequalities say is that the geometry of a body constraints also the amount of mass (energy) that it can carry. 

\vs
The argument above, leading to (\ref{GEOMB}), was made for $\Lambda=11/10$ and $\mathcal{R}'\leq 2\mathcal{R}/3$, but nothing about these assumptions was used in an essential way. The following Theorem generalises the bound (\ref{GEOMB}) to any $\Lambda>1$ and any $\mathcal{R}'<\mathcal{R}$ but the result is not (cannot be) as nice as (\ref{GEOMB}) simply because $\Lambda >> 1$ implies that the geometry of $\body$ can deviate significantly from that of a perfect sphere in Euclidean space. The proof is based on a simple adaptation of the argument above and is left to the readers.

\begin{Theorem}\label{THHU} Let $\body$ be a topologically spherical body, as seen on a maximal, axisymmetric and asymptotically flat slice. Suppose that $\body$ is described as $\body=\big\{(r,\theta,\varphi): r\leq \mathcal{R}\big\}$ for some coordinates $(r,\theta,\varphi)$ and that over $\body$ we have
\be
\Lambda^{-2}g_{F}\leq g\leq \Lambda^{2}\, g_{F},
\ee
where $g_{F}=dr^{2}+r^{2}(d\theta^{2}+\sin^{2}\theta d\varphi^{2})$ is the Euclidean metric, and $\Lambda\geq 1$. Then, for every $\mathcal{R}'<\mathcal{R}$ there is $\mathcal{J}(\mathcal{R}',\Lambda)<\infty$ such that the angular momentum of the internal region $\body_{\mathcal{R}'}=\{(r,\theta,\varphi): r\leq \mathcal{R}'\}$ is bounded by $\mathcal{J}(\mathcal{R}',\Lambda)$. 
\end{Theorem}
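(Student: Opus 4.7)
The plan is to extend the dyadic-shell argument given just above for the case $\Lambda=11/10$, $\mathcal{R}'\leq 2\mathcal{R}/3$ to arbitrary $\Lambda\geq 1$ and arbitrary $\mathcal{R}'<\mathcal{R}$; only the numerical constants change. First I would partition $\body_{\mathcal{R}'}$ into the axisymmetric cylindrical dyadic shells
\[
\body'_{i}\;=\;\body_{\mathcal{R}'}\cap\bigl\{(r,\theta,\varphi):\mathcal{R}'/2^{\,i}\leq r\sin\theta\leq\mathcal{R}'/2^{\,i-1}\bigr\},\qquad i=1,2,3,\dots,
\]
whose union is $\body_{\mathcal{R}'}$ minus the rotation axis $\{\sin\theta=0\}$, a set carrying no Kommar angular momentum since $\xi$ vanishes on it. Thus $|J(\body_{\mathcal{R}'})|\leq\sum_{i\geq 1}|J(\body'_{i})|$. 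Each shell $\body'_{i}$ is disjoint from the axis; its half-plane projection $\Pi(\body'_{i})$ is a vertical rectangular slab intersected with a half-disc---topologically a connected region with a single outer boundary curve---and consequently $\body'_{i}$ itself is connected. Theorem \ref{AST1} therefore applies directly to each shell.

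Next I would convert the pinching $\Lambda^{-2}g_{F}\leq g\leq\Lambda^{2}g_{F}$ into length estimates on the three relevant quantities. A Killing orbit in $\body'_{i}$ has flat circumference at most $2\pi\,\mathcal{R}'/2^{\,i-1}$; the flat distance from $\body'_{i}$ to the axis is $\mathcal{R}'/2^{\,i}$; and the outer boundary of $\Pi(\body'_{i})$ (two vertical sides of length at most $\mathcal{R}'$ plus two arcs of the cap $\{r=\mathcal{R}'\}$) has flat length bounded by a universal constant $C_{0}$ times $\mathcal{R}'$. The pinching upgrades these to
\[
R_{i}\leq\frac{2\pi\Lambda\,\mathcal{R}'}{2^{\,i-1}},\qquad D_{i}\geq\frac{\mathcal{R}'}{\Lambda\,2^{\,i}},\qquad P_{i}\leq C_{0}\Lambda\,\mathcal{R}',
\]
where $P_{i}$ is realised by the $\varphi=0$ section of $\partial\body'_{i}$, whose ambient $g$-length equals the quotient-length of its projection and hence is at most $\Lambda$ times the flat length. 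Plugging into (\ref{EEE}) gives
\[
8\pi\,|J(\body'_{i})|\;\leq\;\left(1+\frac{C_{0}\Lambda^{2}\,2^{\,i}}{\pi}\right)\frac{4\pi^{2}\Lambda^{2}\,\mathcal{R}'^{\,2}}{4^{\,i-1}},
\]
and summation over $i\geq 1$ yields a convergent geometric-type series---the $i$-th term decays like $2^{\,-i}$---whose total is a finite constant $\mathcal{J}(\mathcal{R}',\Lambda)$ depending only on $\mathcal{R}'$ and $\Lambda$, exactly as asserted.

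The one delicate point, and the main obstacle in the argument, is that the factor $\bigl(1+P_{i}/(\pi D_{i})\bigr)$ grows like $\Lambda^{2}\,2^{\,i}$ while $R_{i}^{2}$ only decays like $4^{\,-i}$, so convergence is secured by exactly one extra power of $2$ per shell; this leaves no slack in the dyadic scaling and forces $\mathcal{J}(\mathcal{R}',\Lambda)$ to blow up at least polynomially in $\Lambda$. A secondary verification, which is however immediate, is that the connectedness of $\body'_{i}$ and the single-outer-boundary property of $\Pi(\body'_{i})$ are topological features of the coordinate domain and so are unaffected by the metric distortion, no matter how large $\Lambda$ is; no optimisation of $C_{0}$ or of the dyadic ratio is needed for the mere finiteness asserted in Theorem \ref{THHU}.
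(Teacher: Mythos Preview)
Your proposal follows exactly the strategy the paper intends: the paper states that Theorem \ref{THHU} ``is based on a simple adaptation of the argument above and is left to the readers'', and your dyadic shelling together with Theorem \ref{AST1} applied shell-by-shell is precisely that adaptation.

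One point deserves a remark. Your lower bound $D_{i}\geq \mathcal{R}'/(\Lambda\,2^{\,i})$ is obtained by saying that the metric pinching converts flat distances to $g$-distances up to a factor $\Lambda$. This is only immediate for curves that stay inside $\body$, whereas $D_{i}$ is the ambient $g$-distance from $\body'_{i}$ to the axis, and a competing curve could in principle exit through $\{r=\mathcal{R}\}$ and reach the axis outside $\body$, where the pinching is not assumed. The fix is short: any curve from $\body'_{i}$ to the axis either stays in $\body$, in which case its $g$-length is at least $\Lambda^{-1}\mathcal{R}'/2^{\,i}$ as you claim, or it first exits $\body$, in which case the portion inside $\body$ already has $g$-length at least $\Lambda^{-1}(\mathcal{R}-\mathcal{R}')$. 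Hence
\[
D_{i}\;\geq\;\Lambda^{-1}\min\bigl(\mathcal{R}'/2^{\,i},\ \mathcal{R}-\mathcal{R}'\bigr).
\]
For all but the finitely many indices $i$ with $2^{\,i}<\mathcal{R}'/(\mathcal{R}-\mathcal{R}')$ your bound holds verbatim and your geometric series converges; the exceptional small $i$ each contribute a finite amount controlled by $\Lambda$, $\mathcal{R}'$ and $\mathcal{R}-\mathcal{R}'$. (In the paper's worked example the restriction $\mathcal{R}'\leq 2\mathcal{R}/3$ makes $\mathcal{R}-\mathcal{R}'\geq \mathcal{R}'/2\geq \mathcal{R}'/2^{\,i}$ for every $i\geq 1$, so the issue never arises there.) With this small addition your argument is complete.
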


\subsection{Related results also proved in this article.}

Another interesting avenue to measure the influence of angular momentum on bodies is through {\it enclosing isoperimetric spheres}. 
In many instances, the surface of an axisymmetric body $\body$, (that may or may not intersect the axis), is itself isoperimetric stable, meaning that it minimises area among volume-preserving variations. In other instances the body is enclosed by a nearby one such surface.  
Whatever the case, the angular momentum of the the body highly influences the shape of the isoperimetric surface. In this respect we are able to prove the following result.
\begin{Theorem}\label{ISESTT} Let $S$ be a stable isoperimetric, axisymmetric sphere enclosing a body $\body$, (and nothing else). Then, 
\be\label{ISEST}
|J|\leq {\rm c_{1}}\, R\sqrt{A}\leq {\rm c_{2}}\, R\, L
\ee
where ${\rm c_{1}}=\sqrt{6}/(8\pi^{3/2})$, ${\rm c_{2}}=\sqrt{6}/(4\pi)$, $|J|$ is the angular momentum of $\body$ and $A$, $R$ and $L$ are, respectively, the area of $S$, the length of the greatest axisymmetric orbit in $S$ and the distance from the North to the South pole of $S$.
\end{Theorem}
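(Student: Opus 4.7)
The plan is to express the angular momentum as a boundary integral on $S$ via the momentum constraint, bound this flux by Cauchy--Schwarz, and control the $L^{2}$-norm of $K$ on $S$ using isoperimetric stability together with Hersch's conformal M\"obius trick. On the maximal slice $\hyper$, $\operatorname{tr}K=0$ so the momentum constraint reads $8\pi j_{i}=\nabla^{k}K_{ki}$. Contracting with the Killing field $\xi^{i}$, using the antisymmetry of $\nabla\xi$, and integrating by parts on the region enclosed by $S$ (which contains only $\body$ as matter), one gets
\[
8\pi\, J(\body)=\int_{S} K(\xi,\nu)\, dA,
\]
with $\nu$ the outer unit normal. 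Since $S$ is axisymmetric $\xi$ is tangent to $S$, so $|K(\xi,\nu)|\leq |\xi|\,|K|$ (Frobenius norm); pointwise on $S$, $|\xi|\leq R/(2\pi)$ by the definition of $R$. Cauchy--Schwarz then yields
\[
(8\pi|J|)^{2}\leq \frac{R^{2}}{4\pi^{2}}\, A\,\int_{S}|K|^{2}\,dA,
\]
and it remains to show $\int_{S}|K|^{2}\,dA\leq 24\pi$.

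For this I use the isoperimetric stability of $S$: for every $\alpha\in C^{\infty}(S)$ with $\int_{S}\alpha\,dA=0$,
\[
\int_{S}\bigl(|\Pi|^{2}+\mathrm{Ric}(n,n)\bigr)\alpha^{2}\,dA\leq \int_{S}|\nabla\alpha|^{2}\,dA,
\]
where $\Pi$ is the second fundamental form of $S$ in $\hyper$ and $n$ is the unit timelike normal to $\hyper$. Combining the Gauss equation for $S\subset\hyper$ with the Hamiltonian constraint $R_{\hyper}=|K|^{2}+16\pi\rho$, the potential rewrites as
\[
|\Pi|^{2}+\mathrm{Ric}(n,n)=\tfrac12|K|^{2}+8\pi\rho+\tfrac32|\Pi|^{2}-\tfrac12 H^{2}-\kappa,
\]
where $H$ is the mean curvature, $\kappa$ the Gauss curvature of $S$, and the combination $\tfrac32|\Pi|^{2}-\tfrac12 H^{2}\geq 0$. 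Since $S$ is a topological 2-sphere, there is a conformal diffeomorphism $\phi\colon(S,g|_{S})\to(\mathbb{S}^{2},g_{0})$; by Hersch's M\"obius trick $\phi$ can be arranged so that the pullbacks $X_{1},X_{2},X_{3}$ of the standard coordinates on $\mathbb{S}^{2}\subset\mathbb{R}^{3}$ have mean zero on $S$. Plugging $\alpha=X_{i}$, summing over $i$, using $\sum X_{i}^{2}\equiv 1$ and the 2-dimensional conformal invariance $\sum_{i}\int_{S}|\nabla X_{i}|^{2}\,dA=\int_{\mathbb{S}^{2}}2\,dA_{0}=8\pi$, discarding the non-negative $\rho$, $\Pi$, $H$ terms, and invoking Gauss--Bonnet $\int_{S}\kappa\,dA=4\pi$, one arrives at $\int_{S}|K|^{2}\,dA\leq 24\pi$. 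Substitution yields $|J|\leq (\sqrt{6}/(8\pi^{3/2}))\,R\sqrt{A}=c_{1}R\sqrt{A}$.

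The second inequality reduces to the geometric bound $\sqrt{A}\leq 2\sqrt{\pi}\,L$, equivalently $A\leq 4\pi L^{2}$, for an axisymmetric topological sphere of meridian length $L$. Parameterizing a meridian by arclength $t\in[0,L]$, axisymmetry gives $A=2\pi\int_{0}^{L}|\xi|(t)\,dt$ and $R=2\pi\max_{t}|\xi|(t)$; the trivial bound $A\leq RL$ is insufficient. The sharper $4\pi L^{2}$ bound should come from a second application of the stability inequality, using now a meridional trial function adapted to the axisymmetric structure so as to constrain $|\xi|(t)$ directly in terms of $L$. I expect this to be the main technical obstacle, because selecting the trial function so that the resulting numerical constant is exactly $4\pi$ (and hence $c_{2}$ matches the stated $\sqrt{6}/(4\pi)$) requires careful geometric bookkeeping not needed in the first, Hersch-based step.
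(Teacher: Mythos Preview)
Your argument for the first inequality is essentially the paper's, with a notational slip and a minor algebraic error. The stability potential should involve $\mathrm{Ric}_{g}(\nu,\nu)$ with $\nu$ the \emph{spatial} unit normal to $S$ in $\hyper$, not the timelike $n$; and the Gauss--constraint identity actually gives
\[
|\Pi|^{2}+\mathrm{Ric}_{g}(\nu,\nu)=\tfrac12|K|^{2}+8\pi\rho+\tfrac12|\Pi|^{2}+\tfrac12 H^{2}-\kappa,
\]
not your $\tfrac32|\Pi|^{2}-\tfrac12 H^{2}$ term. Fortunately both expressions are non-negative, so the Hersch step still produces $\int_{S}|K|^{2}\,dA\leq 24\pi$. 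The paper simply cites Christodoulou--Yau for this bound rather than reproving it, and it uses the vacuum hypothesis at $S$ to write $|K|^{2}=R_{g}$; your version, keeping and discarding $\rho$, is in fact slightly more self-contained.

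The genuine gap is the second inequality: you explicitly stop short of proving $A\leq 4\pi L^{2}$, and the idea you sketch (``a meridional trial function'') is too vague. The paper's argument is this. For $L_{1}\in(0,L)$ set $L_{2}=L-L_{1}$ and define on $S$
\[
\alpha(p)=\begin{cases}1-\dist(p,\mathcal{P}_{1})/L_{1} & \text{if }\dist(p,\mathcal{P}_{1})\leq L_{1},\\ -1+\dist(p,\mathcal{P}_{2})/L_{2} & \text{if }\dist(p,\mathcal{P}_{2})\leq L_{2}.\end{cases}
\]
By continuity in $L_{1}$ one can choose $L_{1}=L_{1}^{*}$ so that $\int_{S}\alpha\,dA=0$. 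Isoperimetric stability (in the rewritten form you already used) then gives $\int_{S}\bigl(|\nabla\alpha|^{2}+\kappa\alpha^{2}\bigr)\,dA\geq 0$. Now apply the radial Castillon-type estimate separately on each geodesic cap: for a function $1-\dist(\cdot,q)/L$ supported on the $L$-ball about a point $q$ one has $\int(|\nabla\alpha|^{2}+\kappa\alpha^{2})\,dA\leq 2\pi-A(\text{ball})/L^{2}$. Summing the two caps yields
\[
0\leq 4\pi-\frac{A_{1}^{*}}{L_{1}^{*2}}-\frac{A_{2}^{*}}{L_{2}^{*2}}\leq 4\pi-\frac{A_{1}^{*}+A_{2}^{*}}{\max\{L_{1}^{*2},L_{2}^{*2}\}}\leq 4\pi-\frac{A}{L^{2}},
\]
since the two caps cover $S$ and each $L_{i}^{*}\leq L$. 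This gives $A\leq 4\pi L^{2}$ with exactly the constant needed for $c_{2}=\sqrt{6}/(4\pi)$. The missing ingredient in your sketch was the two-pole, mean-zero radial construction together with the pointwise Castillon bound on each piece; a single meridional function will not produce the sharp $4\pi$.
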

It is worth noting that black-hole apparent horizons, (i.e. stable MOTS), satisfy the inequalities
\be\label{ISESTBH}
|J|\leq {\rm c'_{1}}\, R\sqrt{A}\leq {\rm c'_{2}}\, R\, L
\ee 
with ${\rm c'_{1}}=1/(8\pi^{3/2})$ and ${\rm c'_{2}}=1/(32^{1/2}\pi^{1/2})$, which have the same form as (\ref{ISEST}) but with different constants. The inequalities (\ref{ISESTBH}) can be easily obtained by combining equations (16) and (17) in \cite{Reiris:2013jaa}.

\vs
When the surface of a body $\body$ is not isoperimetric stable or when there are no stable isoperimetric surfaces in its vicinity, then the Theorem \ref{ISESTT} doesn't say anything about the size of $\body$ itself. Yet we believe that a relation between $J$ and ``size'' should exist in general.

In General Relativity it is often the case that, to assess the validity of a statement involving angular momentum, (one that we do not know how to prove), a good alternative is start proving a similar statement for Einstein-Maxwell-(Matter) spacetimes but with $Q^{2}$ playing the role of $|J|$. With this in mind, let us consider a spherically symmetric charged material body whose exterior is also spherically symmetric and therefore modelled by the Reissner-Nordstr\"om spacetime, and ask whether the charge $|Q|$ of the body imposes any constraint on its size. By ``size'' we understand here no more than the area of the surface of the body $\body$. 
The following theorem answers this question\footnote{I would like to thank Sergio Dain for pointing out this problem to me.} affirmatively and gives further support to the belief that the angular momentum should impose strict constraints on the size of rotating bodies.
\begin{Theorem}\label{QATHM} Consider a spherically symmetric charged body $\body$ as seen on an asymptotically flat slice $\hyper$. The slice is not necessarily maximal and the stress energy tensor is assumed to satisfy the dominant energy condition. Then, the area $A$ of the surface of $\body$ satisfies
\be\label{MQA}
\frac{\pi Q^{4}}{M^{2}}\leq A,
\ee
where $Q$ is the total charge and $M$ is the ADM-mass. In particular if $M<|Q|$, that is, if the exterior of the body is a super-extreme Reissner-Nordstr\"om spacetime, then $\pi Q^{2}\leq A$.
\end{Theorem}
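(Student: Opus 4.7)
The plan is to compute the Misner--Sharp quasi-local mass of $\partial\body$ from both outside and inside, and compare the two answers. Recall that on any SO(3)-orbit of areal radius $r$ in a spherically symmetric spacetime this mass is the spacetime invariant
\be
m_{MS}=\frac{r}{2}\bigl(1-g^{\mu\nu}\nabla_\mu r\,\nabla_\nu r\bigr).
\ee

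Since $\body$ is spherically symmetric and the matter is supported in $\body$, Birkhoff's theorem for the Einstein--Maxwell system forces the exterior spacetime to coincide (on the image of $\hyper\setminus\body$) with a Reissner--Nordstr\"om region of parameters $(M,Q)$. A direct calculation in Reissner--Nordstr\"om gives $m_{MS}(r)=M-Q^2/(2r)$, so, writing $r_0=\sqrt{A/(4\pi)}$ for the areal radius of $\partial\body$, one has $m_{MS}(\partial\body)=M-Q^2/(2r_0)$.

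From the interior side, $\hyper\cap\body$ is a spherically symmetric spacelike three-manifold; for a topologically solid body it is a warped product $ds^2+r(s)^2\,d\Omega^2$ with a regular center at which $m_{MS}=0$. Using the Hamiltonian and momentum constraints, the derivative $dm_{MS}/ds$ along the radial curve can be written as a flux built out of the energy density $\rho$ and the radial momentum density $j$; the dominant energy condition $\rho\geq|j|$, together with the spacelike character of the slice, makes this flux non-negative. Integrating from the center outward yields $m_{MS}(\partial\body)\geq 0$, hence $M-Q^2/(2r_0)\geq 0$, i.e.\ $r_0\geq Q^2/(2M)$, and therefore
\be
A=4\pi r_0^2\;\geq\;\pi\,\frac{Q^4}{M^2},
\ee
as asserted. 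The super-extreme statement $A\geq\pi Q^2$ when $M<|Q|$ is then immediate since $Q^4/M^2>Q^2$ in that regime.

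The main obstacle is the interior monotonicity step: without assuming maximality of the slice, one must deduce the non-negativity of $m_{MS}$ on $\partial\body$ purely from DEC and the spacelike character of $\hyper$. Concretely, this requires expanding $dm_{MS}/ds$ in spherical symmetry via the constraint equations, decomposing the slice's extrinsic curvature into its radial and tangential parts, and absorbing the momentum-density contribution using $\rho\geq|j|$ together with the fact that the slice is never more than ``null-tilted'' relative to the SO(3)-orbits. Handling the regular center $r=0$, and, if one wishes to relax the solid-body topology, working with $m_{MS}$ directly as a spacetime invariant on SO(3)-orbits, are the remaining bookkeeping points.
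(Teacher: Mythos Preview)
Your approach is the same as the paper's: in spherical symmetry the Misner--Sharp mass coincides with the Hawking energy, and the paper also computes it at $\partial\body$ from the Reissner--Nordstr\"om exterior as $M-Q^{2}/(2r_{0})$ and then argues it is non-negative from the interior. The difference is in how carefully the interior monotonicity is justified, and there your sketch has a real gap.

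Writing the slice metric as $ds^{2}+r(s)^{2}d\Omega^{2}$ and the tangential eigenvalue of $K$ as $K_{T}$, the constraint equations give exactly
\[
\frac{dm_{MS}}{ds}=4\pi r^{2}\bigl(r'\rho - rK_{T}\,j_{s}\bigr).
\]
The dominant energy condition $|j_{s}|\le\rho$ then yields $m_{MS}'\ge 4\pi r^{2}\rho\,(r'-|rK_{T}|)$, which is non-negative only when $r'\ge |rK_{T}|$, i.e.\ when the sphere is untrapped ($\theta^{+}\ge 0$ and $\theta^{-}\ge 0$). Being spacelike is not enough; your ``never more than null-tilted'' remark gestures at this but does not secure it, and nothing in the hypotheses rules out (marginally) trapped spheres inside $\body$.

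The paper closes this gap in two moves. First it checks, directly from the Reissner--Nordstr\"om geometry, that $\theta^{+}>0$ and $\theta^{-}>0$ at $\partial\body$; when $M\ge|Q|$ this requires a separate case split on whether $r_{\partial\body}$ lies above or below $M-\sqrt{M^{2}-Q^{2}}$ (in the former case the inequality $r_{\partial\body}\ge Q^{2}/(2M)$ is immediate and no monotonicity is needed). Second, if every interior sphere is untrapped the monotonicity runs from the regular centre; if not, one takes the outermost sphere with $\theta^{+}=0$ or $\theta^{-}=0$, observes that the Hawking energy there equals $r/2>0$, and applies monotonicity only on the untrapped region between that sphere and $\partial\body$. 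With these two points added your argument becomes the paper's.
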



\section{Proofs.}
\subsection{The setup.}
We consider a maximal, axisymmetric and asymptotically flat, Cauchy hypersurface $\mathcal{S}$ of an axisymmetric spacetime $({\bf M};{\bf g};{\bf T})$. The stress-energy tensor of matter ${\bf T}$ is assume to satisfy the dominant energy condition. For simplicity we also assume that $\mathcal{S}$ is diffeomorphic to $\mathbb{R}^{3}$. The three-metric induced on $\hyper$ is denoted by $g$ and $n$ will be a unit, (timelike), normal to $\mathcal{S}$. The second fundamental form of $\mathcal{S}$ as a hypersurface of the spacetime (say, in the direction of $n$) is denoted by $K$. The axisymmetric Killing field is denoted by $\xi$ and its norm by $\lambda:=|\xi|_{g}$. 
The twist one-form $\boldsymbol{\omega}$ on ${\bf M}$ is defined by
\be
\boldsymbol{\omega}_{a}=\boldsymbol{\epsilon}_{abcd}\xi^{b}\boldsymbol{\nabla}^{c}\xi^{d}.
\ee
The energy density $\rho$ on $\mathcal{S}$ is $\rho={\bf T}(n,n)$ and the current $j$ associated to the linear momentum (i.e. a one-form in $\mathcal{S}$) is $j(v) = {\bf T}(n,v)$ for any $\nu\in {\mathcal T} \mathcal{S}$. We assume $|j|\leq \rho$, (pointwise).

Besides ${\bf M}$ and $\mathcal{S}$, there are two other relevant manifolds. 
\begin{enumerate}
\item The quotient of ${\bf M}$ by the action of $\xi$ is $\widetilde{M}$ and its quotient metric is $\tilde{g}$. The projector operator is $\widetilde{\Pi}_{a}^{\ b}={\bf g}_{a}^{\ b}-\xi_{a}\xi^{b}/\lambda^{2}$.

\item The quotient of $\mathcal{S}$ by the action of $\xi$ is $\Sigma$, (we include the axis in $\Sigma$), and its quotient two-metric is $\gamma$. As $n$ is perpendicular to $\xi$ then $n$ can be thought as a unit, (timelike), vector field over $\Sigma$. The second fundamental form of $\Sigma$, (in the direction of $n$), as a hypersurface of $(\widetilde{M},\tilde{g})$ is denoted by $\chi$. The Gaussian curvature of $\gamma$ is denoted by ($2\kappa=R_{\gamma}$) and $\nabla$ is its covariant derivative.

\end{enumerate}

\subsection{Proof of the stability of the quotient space.}\label{SOQ-S}
The {\it stability property of the quotient} will be deduced from the following proposition.

\begin{Proposition}\label{SQP} Let $\Delta$ and $\kappa$ be, respectively, the Laplacian and the Gaussian curvature associated to $\gamma$ on $\Sigma$. Then, we have
\be\label{SE}
-\Delta \lambda+\kappa\lambda=\bigg[8\pi \rho+\frac{|K|_{g}^{2}}{2}+\frac{\omega(n)^{2}}{4\lambda^{4}}\bigg]\, \lambda.
\ee
\end{Proposition}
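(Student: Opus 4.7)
The plan is to derive the identity by three steps: a Kaluza-Klein reduction of the three-metric $g$ on $\mathcal{S}$ along the orbits of $\xi$, application of the Hamiltonian constraint on the maximal slice, and identification of the spatial Kaluza-Klein field strength with the normal component of the spacetime twist.

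First, locally on $\mathcal{S}$ one can write the three-metric in its Kaluza-Klein form adapted to $\xi$,
\[ g = \gamma + \lambda^{2}(d\phi + \beta)^{2}, \]
where $\phi$ is the rotational coordinate (so $\xi = \partial_{\phi}$), $\gamma$ is the quotient metric on $\Sigma$, and $\beta$ is a 1-form on $\Sigma$ (the spatial Kaluza-Klein connection). The standard scalar curvature formula for a Riemannian principal $S^{1}$-bundle over a 2-base yields
\[ R_{g} \;=\; 2\kappa \;-\; \frac{2}{\lambda}\Delta_{\gamma}\lambda \;-\; \frac{\lambda^{2}}{4}\,|d\beta|_{\gamma}^{2}, \]
which, after multiplication by $\lambda/2$, rearranges to $-\Delta_{\gamma}\lambda + \kappa\lambda = (R_{g}/2)\lambda + (\lambda^{3}/8)|d\beta|_{\gamma}^{2}$. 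Since $\mathcal{S}$ is maximal ($\operatorname{tr}_{g}K = 0$) the Hamiltonian constraint reduces to $R_{g} = |K|_{g}^{2} + 16\pi\rho$, and substitution gives
\[ -\Delta_{\gamma}\lambda + \kappa\lambda \;=\; \Big(8\pi\rho + \tfrac{1}{2}|K|_{g}^{2}\Big)\lambda \;+\; \tfrac{\lambda^{3}}{8}|d\beta|_{\gamma}^{2}. \]

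To conclude it suffices to prove the pointwise identity $\omega(n)^{2} = (\lambda^{6}/2)|d\beta|_{\gamma}^{2}$. In the spacetime $\xi^{\flat} = \lambda^{2}(d\phi + \mathcal{A})$, where $\mathcal{A}$ is the 4-dimensional Kaluza-Klein connection on $\widetilde{M}$ whose restriction to $\mathcal{S}$ coincides with $\beta$. Using $(d\phi + \mathcal{A})\wedge(d\phi + \mathcal{A}) = 0$, direct expansion gives $\xi^{\flat}\wedge d\xi^{\flat} = \lambda^{4}\,(d\phi + \mathcal{A})\wedge d\mathcal{A}$. Since $\omega$ equals (up to sign) the spacetime Hodge dual $\star(\xi^{\flat}\wedge d\xi^{\flat})$, and the 4-dimensional volume form factors as $\boldsymbol{\epsilon}(n,\cdot,\cdot,\cdot) = \pm\epsilon_{g}$, one obtains $\omega(n) = \pm\star_{g}[(\xi^{\flat}\wedge d\xi^{\flat})|_{\mathcal{S}}]$, which depends only on $d\mathcal{A}|_{\mathcal{S}} = d\beta$. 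Evaluating in an orthonormal coframe $(e^{1}, e^{2})$ on $\Sigma$ together with $e^{3} = \lambda(d\phi + \beta)$, and writing $d\beta = F\,e^{1}\wedge e^{2}$, one computes $\omega(n) = \pm\lambda^{3}F$ while $|d\beta|_{\gamma}^{2} = 2F^{2}$, establishing the identity. Substituting $(\lambda^{3}/8)|d\beta|_{\gamma}^{2} = \omega(n)^{2}/(4\lambda^{3})$ into the equation above yields exactly (\ref{SE}).

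The main obstacle is the third step: carefully tracking the signs, normalizations, and 4-vs-3 dimensional Hodge duality that relate the spacetime twist component $\omega(n)$ to the spatial Kaluza-Klein field strength $d\beta$ on $\Sigma$. The first two steps use only the standard scalar curvature formula for a Kaluza-Klein metric and the Hamiltonian constraint on a maximal slice, so the content of the Proposition is essentially concentrated in the final geometric identification.
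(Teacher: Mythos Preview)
Your argument is correct, and it follows a genuinely different route from the paper's.

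The paper first quotients the \emph{spacetime} by $\xi$ to obtain the three-dimensional Lorentzian manifold $(\widetilde{M},\tilde g)$, invokes the Ricci identities (\ref{EQ1})--(\ref{EQ2}) for that quotient, and then treats $\Sigma$ as a spacelike hypersurface of $\widetilde{M}$, bringing in the relations (\ref{EQ3})--(\ref{EQ4}) between $\chi$ (the second fundamental form of $\Sigma$ in $\widetilde{M}$) and $K$. The identity emerges only after combining all four and performing what the paper itself calls ``a pair of crucial cancellations.'' By contrast, you reverse the order of the two reductions: you first use the Hamiltonian constraint on the maximal slice $\mathcal{S}$ to get $R_{g}=|K|_{g}^{2}+16\pi\rho$, and then apply the Kaluza--Klein scalar-curvature formula for the circle bundle $\mathcal{S}\to\Sigma$. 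This makes the identity essentially a rewriting of the energy constraint and avoids the intermediate manifold $\widetilde{M}$ altogether; the only nontrivial step is the identification $\omega(n)^{2}=(\lambda^{6}/2)\,|d\beta|_{\gamma}^{2}$, which you handle correctly via $\xi^{\flat}\wedge d\xi^{\flat}=\lambda^{4}(d\phi+\mathcal{A})\wedge d\mathcal{A}$. Your route is shorter and more transparent about why the formula holds; the paper's route, on the other hand, keeps track of the full twist $\omega$ and of $\chi$, which could be useful in situations where more than the single component $\omega(n)$ is needed. One small slip: $\mathcal{A}$ lives on $\widetilde{M}$, so its restriction should be to $\Sigma$ rather than to $\mathcal{S}$; the content (namely $i^{*}d\mathcal{A}=d\beta$ for $i:\Sigma\hookrightarrow\widetilde{M}$) is unaffected.
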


\begin{proof}[\bf Proof.] The computation to obtain (\ref{SQP}) relies on the equations
\begin{align}
\label{EQ1} & \widetilde{\rm R}{\rm ic}_{ab}=\frac{\tilde{\nabla}_{a}\tilde{\nabla}_{b} \lambda}{\lambda} + \frac{\displaystyle \omega_{a}\omega_{b}-(\tilde{g}^{cd}\omega_{c}\omega_{d})\tilde{g}_{ab}}{{2\lambda^{4}}}+\widetilde{\Pi}_{a}^{\ c}\widetilde{\Pi}_{b}^{\ d}\boldsymbol{\rm\bf Ric}_{cd},\\
\label{EQ2} & \frac{\displaystyle \tilde{g}^{ab}\tilde{\nabla}_{a}\tilde{\nabla}_{b} \lambda}{\lambda}=-\frac{\displaystyle (\tilde{g}^{ab}\omega_{a}\, \omega_{b})}{2\lambda^{4}}-\frac{\boldsymbol{\rm\bf Ric}_{ab}\xi^{a}\xi^{b}}{\lambda^{2}},
\end{align}
and, 
\begin{align}
\label{EQ3} & \gamma^{ab} \chi_{ab}=\frac{n^{a}\tilde{\nabla}_{a}\lambda}{\lambda},\\
\label{EQ4} & \big|K|^{2}_{g}=(\chi_{ab}\chi_{cd}\gamma^{ac}\gamma^{bd})+\frac{(n^{a}\tilde{\nabla}_{a}\lambda)^{2}}{\lambda^{2}}+\frac{(\gamma^{ab}\omega_{a}\omega_{b})}{2\lambda^{4}}
\end{align}
The equations (\ref{EQ1}) and (\ref{EQ2}) are equivalent to equations (18.16) and (18.12) of \cite{MR2003646} respectively\footnote{The calculation in \cite{MR2003646} is for timelike Killing fields $\xi$, but the same formulae apply when, like in our case, $\xi$ is a rotational Killing field. Note too that $F$ in \cite{MR2003646} is $F=\lambda^{2}$.}. On the other hand, (\ref{EQ3}) and (\ref{EQ4}) are the equations (42) and (45) in \cite{Dain:2008xr} respectively. 

Contracting (\ref{EQ1}) with $\tilde{g}^{ab}$ and then using (\ref{EQ2}) gives
\be
\widetilde{\rm R}=\frac{2\tilde{g}^{ab}\tilde{\nabla}_{a}\tilde{\nabla}_{b} \lambda}{\lambda}-\frac{(\tilde{g}^{ab}\omega_{a}\omega_{b})}{2\lambda^{4}}+{\rm\bf R}
\ee
Also, contracting (\ref{EQ1}) with $n^{a}n^{b}$ gives
\be
n^{a}n^{b}\widetilde{\rm R}{\rm ic}_{ab}=\frac{\displaystyle n^{a}n^{b}\tilde{\nabla}_{a}\tilde{\nabla}_{b} \lambda}{\lambda} +\frac{(\gamma^{ab}\omega_{a}\omega_{b})}{2\lambda^{4}}+n^{a}n^{b}{\rm\bf Ric}_{ab}
\ee
From these two equations we obtain the combination
\begin{align}\label{2PS}
\nonumber \underbrace{2n^{a}n^{b} \widetilde{\rm R}{\rm ic}_{ab}+\widetilde{\rm R}\vphantom{\frac{}{}}}_{\rm (I)}\ =\ & \underbrace{2\big[\frac{\displaystyle n^{a}n^{b}\tilde{\nabla}_{a}\tilde{\nabla}_{b} \lambda}{\lambda} +\frac{\tilde{g}^{ab}\tilde{\nabla}_{a}\tilde{\nabla}_{b} \lambda}{\lambda}\big]}_{\rm (II)} +\underbrace{\frac{((n^{a}\omega_{a})^{2}+\gamma^{ab}\omega_{a}\omega_{b})}{2\lambda^{4}}\vphantom{\frac{}{}}}_{\rm (III)}\\ & +\underbrace{2n^{a}n^{b}{\rm\bf Ric}_{ab} +{\bf R}\vphantom{\big[}}_{\rm (IV)}.
\end{align}
To obtain (\ref{SE}) we manipulate the expressions (I), (II) and (IV) in the equation before. The term (IV) is equal to $16\pi \rho$ from the Einstein equations. On the other hand using
\be
n^{a}n^{b}\tilde{\nabla}_{a}\tilde{\nabla}_{b}\lambda+\tilde{g}^{ab}\tilde{\nabla}_{a}\tilde{\nabla}_{b}\lambda=\Delta \lambda+(\gamma^{ab}\chi_{ab})(n^{a}\tilde{\nabla}_{a}\lambda)=\Delta \lambda + (n^{a}\tilde{\nabla}_{a}\lambda)^{2}
\ee
we can transform the term (II) into $2(\Delta \lambda + (n^{a}\tilde{\nabla}_{a}\lambda)^{2})/\lambda$. Finally, for (I) we have 
\be
2n^{a}n^{b}\widetilde{\rm R}{\rm ic}_{ab}+\widetilde{\rm R}=2\kappa - (\chi_{ab}\chi_{cd}\gamma^{ac}\gamma^{bd})+(\gamma^{ab}\chi_{ab})^{2}
\ee
which using (\ref{EQ3}) and (\ref{EQ4}) can be transformed into
\be
2n^{a}n^{b}\widetilde{\rm R}{\rm ic}_{ab}+\widetilde{\rm R}=2\kappa - \big|K\big|^{2}_{g}+\frac{2(n^{a}\nabla_{a} \lambda)^{2}}{\lambda^{2}}+\frac{(\gamma^{ab}\omega_{a}\omega_{b})}{2\lambda^{4}}
\ee
Using these expressions for (I), (II) and (IV) in (\ref{2PS}), and making a pair of crucial cancellations, we obtain (\ref{SE}).
\end{proof}

Proposition \ref{SQP} gives us immediately the {\it stability property of the quotient}.

\begin{Lemma}[\bf The stability of the quotient]\label{SP} For any function $\alpha$ with compact support in the interior of $\Sigma$ we have
\be\label{SI}
\int_{\Sigma} \big[|\nabla \alpha|^{2} +\kappa \alpha^{2}\big]\, dA= \frac{1}{2} \int_{\Sigma} \bigg[ |K|_{g}^{2} + 16\pi \rho +\frac{\omega(n)^{2}}{2\lambda^{2}}\bigg]\, \alpha^{2}\, dA +\int_{\Sigma} \big|\alpha \nabla \ln \lambda - \nabla \alpha\big|^{2}\, dA.
\ee
\end{Lemma}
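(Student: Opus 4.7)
The plan is to integrate the pointwise identity of Proposition \ref{SQP} against $\alpha^{2}/\lambda$ and reshape the $\Delta\lambda$ term by a single integration by parts followed by a completion of the square. Since $\alpha$ is compactly supported in the interior of $\Sigma$, its support avoids the axis where $\lambda$ vanishes; thus $\lambda > 0$ on $\mathrm{supp}(\alpha)$ and dividing the identity of Proposition \ref{SQP} by $\lambda$ is legitimate. This division gives
$$-\frac{\Delta\lambda}{\lambda}+\kappa \;=\; 8\pi\rho+\frac{|K|_{g}^{2}}{2}+\frac{\omega(n)^{2}}{4\lambda^{4}},$$
so multiplying by $\alpha^{2}$ and integrating over $\Sigma$ produces, on the right, exactly (twice) the factor-$\tfrac{1}{2}$ integral appearing in (\ref{SI}). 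What remains is to convert the left-hand side into $\int_{\Sigma}[|\nabla\alpha|^{2}+\kappa\alpha^{2}]\,dA$ minus the square term.

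For this I would use the elementary identity $\Delta\ln\lambda = \Delta\lambda/\lambda - |\nabla\ln\lambda|^{2}$ together with $\nabla(\alpha^{2})\cdot\nabla\ln\lambda = 2\alpha\nabla\alpha\cdot\nabla\ln\lambda$. Integration by parts (the boundary contribution vanishes because $\alpha$ has compact support in the interior of $\Sigma$, and $\nabla\ln\lambda$ stays bounded on $\mathrm{supp}(\alpha)$) gives
$$-\int_{\Sigma}\frac{\alpha^{2}\Delta\lambda}{\lambda}\,dA \;=\; \int_{\Sigma}\bigl(2\alpha\,\nabla\alpha\cdot\nabla\ln\lambda \;-\;\alpha^{2}|\nabla\ln\lambda|^{2}\bigr)\,dA.$$
Adding $\int_{\Sigma}|\nabla\alpha|^{2}\,dA$ to both sides and recognising the perfect square
$$|\nabla\alpha|^{2}-2\alpha\,\nabla\alpha\cdot\nabla\ln\lambda+\alpha^{2}|\nabla\ln\lambda|^{2} \;=\; |\alpha\nabla\ln\lambda-\nabla\alpha|^{2}$$
yields the clean identity
$$\int_{\Sigma}|\nabla\alpha|^{2}\,dA \;-\; \int_{\Sigma}\frac{\alpha^{2}\Delta\lambda}{\lambda}\,dA \;=\; \int_{\Sigma}|\alpha\nabla\ln\lambda-\nabla\alpha|^{2}\,dA.$$

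Finally, adding $\int_{\Sigma}\kappa\alpha^{2}\,dA$ to both sides of the last equation and substituting the right-hand side of the Proposition for $-\alpha^{2}\Delta\lambda/\lambda + \kappa\alpha^{2}$ delivers precisely (\ref{SI}). There is no serious obstacle: the only subtlety is the integrability at the axis, and this is handled for free by the hypothesis that $\alpha$ is compactly supported in the interior of $\Sigma$. All the heavy geometric content has already been absorbed into Proposition \ref{SQP}; the remaining argument is just a single integration by parts plus the algebraic completion of a square.
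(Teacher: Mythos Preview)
Your approach is exactly the paper's: divide Proposition~\ref{SQP} by $\lambda$, multiply by $\alpha^{2}$, integrate, and convert the $-\int\alpha^{2}\Delta\lambda/\lambda$ term via integration by parts and completion of the square (what the paper calls Young's identity). The geometric content and the strategy are identical.

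There is, however, a sign slip in your ``clean identity.'' From your own (correct) line
\[
-\int_{\Sigma}\frac{\alpha^{2}\Delta\lambda}{\lambda}\,dA
=\int_{\Sigma}\bigl(2\alpha\,\nabla\alpha\cdot\nabla\ln\lambda-\alpha^{2}|\nabla\ln\lambda|^{2}\bigr)\,dA,
\]
adding $\int|\nabla\alpha|^{2}$ to the right gives $|\nabla\alpha|^{2}+2\alpha\nabla\alpha\cdot\nabla\ln\lambda-\alpha^{2}|\nabla\ln\lambda|^{2}$, which is \emph{not} the perfect square $|\alpha\nabla\ln\lambda-\nabla\alpha|^{2}$ (the last two signs are reversed). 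The correct rearrangement is to \emph{add} $\int|\alpha\nabla\ln\lambda-\nabla\alpha|^{2}$ to both sides, obtaining
\[
-\int_{\Sigma}\frac{\alpha^{2}\Delta\lambda}{\lambda}\,dA
=\int_{\Sigma}|\nabla\alpha|^{2}\,dA-\int_{\Sigma}|\alpha\nabla\ln\lambda-\nabla\alpha|^{2}\,dA,
\]
which is exactly the identity the paper quotes. With this corrected, your final substitution yields (\ref{SI}) as stated.
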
 

\begin{proof}[\bf Proof] To obtain (\ref{SI}) divide (\ref{SE}) by $\lambda$, multiply by $\alpha^{2}$ and use the identity, (sometimes called Young's identity),
\be
-\int_{\Sigma} \frac{\Delta \lambda}{\lambda}\alpha^{2}\, dA=-\int_{\Sigma} \big|\alpha\nabla\ln \lambda-\nabla\alpha\big|^{2}\, dA+\int_{\Sigma} \big|\nabla\alpha\big|^{2}\, dA. \qedhere
\ee
\end{proof}

\subsection{Proofs in spherical symmetry.}\label{PISS}

Let us start by recalling the definition of the O'Murchadha radius $\radius_{\rm O'Mur}$ of a body $\body$. Let $D$ be a stable minimal disc embedded in $\body$ and let $\gamma$ be its induced metric. Define $rad(D)=\sup\{{\rm dist}_{\gamma}(p,\partial D): p\in D\}$. Then 
\be
\radius_{\rm O'Mur}(\body):=\sup\{rad(D):D \text{ stable minimal disc embedded in }\ \body\}.
\ee 
A fundamental estimate, due to Fischer-Colbrie \cite{MR808112} (see Thm. 2.8 in \cite{MR2483369}) and used by Schoen and Yau in \cite{BHFCM}, says that if $\rho\geq \rho_{0}$ then for every stable minimal disc $D$ we have $rad(D)\leq \sqrt{\pi/6\rho_{0}}$. Therefore, if $\rho\geq \rho_{0}$ then $\radius_{\rm O'Mur}(\body)\leq \sqrt{\pi/6\rho_{0}}$. We will use this estimate in the proof of Theorem \ref{THSP} below.

\begin{proof}[\bf Proof of Theorem \ref{THSP}.] For every constant-radius sphere of $\body$ let $s$ be its radius, (i.e. the distance to the centre of $\body$), let $A(s)$ be its area and set $r(s)=\sqrt{A(s)/4\pi}$, (i.e. $r$ is the areal-radius of the sphere). The radius of $\body$ is denoted by $s_{\body}$.   

We start by proving that for any $0\leq s_{1}<s_{2}\leq s_{\body}$ we have $s_{2}-s_{1}\geq r_{2}-r_{1}$, where $r_{i}=r(s_{i})$, $i=1,2$.
This follows from a nice observation due to Bizon, Malec and O'Murchadha (\!\cite{Bizon:1989xm}, pg. 965), stating that if we write the three-metric $g$ in the form
$g=\phi^{4}(d\tilde{r}^{2}+\tilde{r}^{2}d\Omega^{2})$ then the conformal factor $\phi=\phi(\tilde{r})$ is a monotonically-decreasing function of $\tilde{r}$. Indeed, using this observation we get
\be
s_{2}-s_{1}=\int_{\tilde{r}_{1}}^{\tilde{r}_{2}} \phi^{2}(\tau)d\tau\geq \phi^{2}(\tilde{r}_{2})(\tilde{r}_{2}-\tilde{r}_{1})
\geq \phi^{2}(\tilde{r}_{2})\tilde{r}_{2}-\phi^{2}(\tilde{r}_{1})\tilde{r}_{1}=r_{2}-r_{1}
\ee
as wished. 

Consider the disc formed by the intersection of $\body$ with the plane $\{(s,\theta,\varphi): \varphi=0\}$ which has two-metric 
$\gamma=ds^{2}+r^{2}(s)d\theta^{2}$. By {\it the stability property of the quotient}, half of the disc is stable. Namely, the domain $\mathcal{D}=\{(s,\theta): 0<s<s_{\body}\ \text{and}\ 0<\theta<\pi\}$ is stable. In particular, the distance from any of its points to its boundary is less or equal than $\mathcal{R}_{\rm O'Mur}$. 

As proved above, for any $\bar{s}\in [0,s_{\body}]$ we have $\bar{s}\geq r(\bar{s})>0$. Therefore, for any $\bar{s}\in (0,s_{\body}]$ we can consider the domain
\be 
\hat{\mathcal{D}}=\big\{(s,\theta): s\in (\bar{s}-\bar{r},\bar{s})\ \text{and}\ \theta\in (0,\pi)\big\}\subset \mathcal{D}
\ee
where we are making $\bar{r}:=r(\bar{s})$. Also, for any $s\in (\bar{s}-\bar{r},\bar{s})$ we have $\bar{s}-s\geq \bar{r}-r(s)$ and therefore $r(s)\geq \bar{r}-\bar{s}+s\geq 0$. Hence
\be
\gamma\geq ds^{2}+(\bar{r}+\bar{s}-s)^{2}d\theta^{2}
\ee
over $\hat{\mathcal{D}}$. Making $x=\bar{r}+\bar{s}-s$, we see from this that $\gamma\geq dx^{2}+x^{2}d\theta^{2}$ and that $\hat{\mathcal{D}}= \{(x,\theta): x\in (0,\bar{r})\ \text{and}\ \theta\in (0,\pi)\}$. Noting that $dx^{2}+x^{2}d\theta^{2}$ is just the Euclidean two-metric, we deduce that the distance from the point $(s,\theta)=(\bar{s}+\bar{r}/2,\pi/2)$, (that is, the point $(x,\theta)=(\bar{r}/2,\pi/2)$), to the boundary of $\hat{\mathcal{D}}$ must be greater or equal than $\bar{r}/2$. Hence,
\be
\frac{\bar{r}}{2}\leq \mathcal{R}_{\rm O'Mur}
\ee
But because $\bar{r}=r(\bar{s})$ and because $\bar{s}$ is any point in $(0,s_{\body}]$ we deduce that
\be
A\leq 16\pi\mathcal{R}^{2}_{\rm O'Mur}\leq \frac{8\pi^{2}}{3\rho_{0}}
\ee
as wished. 
\end{proof}

\subsection{Proofs for rotating systems.}\label{POR}
Below we will consider compact and connected regions $\Omega$ in ${\rm Int}(\Sigma)$, (${\rm Int}(\Sigma)$ is the interior of $\Sigma$), with smooth boundary $\partial \Omega$. For any such set we consider the set $\torus(\Omega)$ (simply $\torus$ from now on) in $\Sigma$ consisting of all the axisymmetric orbits in $\Sigma$ which project into $\Omega$. For instance if $\Omega$ is topologically a disc then $\torus$ is a solid torus around the axis of symmetry. 

Given $\Omega$ and $\torus=\torus(\Omega)$, the Kommar angular momentum $J(\torus)$ carried by $\torus$ is  
\be
J(\torus)=\int_{\torus} \langle j,\xi\rangle\, dV
\ee
where $\langle j, \xi\rangle=g(j,\xi)$ and $dV$ is the volume element of $g$.

As in Section \ref{MAINAPPS} define $R=R({\rm T})$ as the length of the greatest axisymmetric orbit projecting into $\Omega$, define $P=P({\rm T})$ to be the sectional perimeter of ${\rm T}$ and let $D=D({\rm T})$ be the distance from ${\rm T}$ to the axis. It is easily checked\footnote{Use that every $\gamma$-geodesic in $\Sigma$ can be (isometrically) lifted to a $g$-geodesic in $\Sigma$, and that the $g$-length of any curve in $\Sigma$ is greater or equal than the $\gamma$-length of its projection into $\Sigma$.} that $D$ is equal to the $\gamma$-distance inside $\Sigma$ from $\Pi({\rm T})$ to the axis $\partial \Sigma$. However, $P$ is not necessarily equal to the $\gamma$-perimeter $\tilde{P}=\tilde{P}({\rm T})$ of $\partial^{e} \Pi({\rm T})$ in $\Sigma$. Instead we only have $P\geq \tilde{P}$.   

There are two main tools that we will use to prove Theorems \ref{AST1} and \ref{ASPP}. The first is the following inequality,
\be\label{MAINE}
8\pi |J(\torus)|\leq \frac{R^{2}}{2\pi}\int_{\Sigma}\big[ |\nabla \alpha|^{2}+\kappa \alpha^{2}\big]\, dA
\ee
valid for any $\Omega$ and any $\alpha$ of compact support in $\Sigma^{
}$ with $\alpha\geq 1$ over $\Omega$. To see this just compute
\begin{align}
\nonumber |J(\torus)| & = \big|\int_{\torus} \langle j,\xi\rangle\, dV\big|\leq 2\pi \int_{\Omega} |j|\lambda^{2}\, dA \leq \frac{R^{2}}{2\pi} \int_{\Omega} |j|\, dA \\  
& \leq \frac{R^{2}}{2\pi} \int_{\Omega} \rho \alpha^{2}\, dA \leq \frac{R^{2}}{2\pi} \int_{\Sigma} \rho \alpha^{2}\, dA \leq \frac{R^{2}}{16\pi}\int_{\Sigma} \big[ |\nabla \alpha|^{2} +\kappa \alpha^{2}\big]\, dA
\end{align}
where we used $|j|\leq \rho$ and that for any orbit $C$ we have ${\rm length}(C)=2\pi \lambda(C)$.

The second tool is a fundamental estimation of the integral $\int (|\nabla \alpha|^{2}+\kappa\alpha^{2})dA$ when the trial functions $\alpha$ 
are chosen conveniently as {\it radial} functions. Let us explain how these functions are defined and which estimations we obtain out of them. Let $\Omega\subset {\rm Int}(\Sigma)$ be a region which is topological a two-disc. Then, for any $L<D$ define the domain
\be
\Omega_{L}:=\big\{p\in (\Sigma\setminus \Omega): \dist_{\gamma}(p,\Omega)\leq L\big\}
\ee
Thus, $\Omega_{L}$ is the set of points in the complement of $\Omega$ and at a distance less or equal than $L$ from $\Omega$ itself.

Now, define $\alpha:\Omega_{L}:\rightarrow \mathbb{R}$ by
\be
\alpha_{L}(p)=1-\frac{\dist_{\gamma}(p,\Omega)}{L}
\ee
The main estimation is that with this particular $\alpha$ (i.e. $\alpha=\alpha_{L}$) we have
\be\label{FB}
\int_{\Omega_{L}}\big[|\nabla\alpha|^{2}+\kappa\alpha^{2}\big]\, dA\leq \frac{2\tilde{P}}{L}+\tilde{P}'-\frac{A}{L^{2}} 
\ee
where $A={\rm Area}(\Omega_{L})$ and $\tilde{P}'$ is the first variation of $\tilde{P}$ in outwards direction to $\Omega$.
This is proved in Theorem 1 of arXiv:1002.3274.

On the other hand, if we define $\alpha=1$ on $\Omega$ then by Gauss-Bonet we obtain
\be\label{FB2}
\int_{\Omega}\big[|\nabla\alpha|^{2}+\kappa\alpha^{2}\big]\, dA=2\pi-\tilde{P}'
\ee
where $\tilde{P}'$ is the first variation of $\tilde{P}$ in the outwards direction to $\Omega$.

Combining (\ref{FB}) and (\ref{FB2}) we deduce that for the $H^{1}$-function $\alpha:{\rm Int}(\Sigma)\rightarrow \mathbb{R}$ of compact support given by
\be\label{FF}
\alpha(p)=
\left\{
\begin{array}{lcl}
1 & {\rm if} & p\in \Omega,\\
\alpha_{L}(p) & {\rm if} & p\in \Omega_{L},\\
0 & {\rm if} & p\in \Sigma\setminus (\Omega\cup\Omega_{L})
\end{array}
\right.
\ee
we have
\be\label{FB3}
\int_{\Sigma}\big[|\nabla\alpha|^{2}+\kappa\alpha^{2}\big]\, dA\leq \frac{2\tilde{P}}{L}+2\pi-\frac{A}{L^{2}}\leq \frac{2\tilde{P}}{L}+2\pi\leq \frac{2 P}{L}+2\pi,
\ee
where the last inequality follows because $\tilde{P}\leq P$.

\vs
We can use now the two tools just described to prove Theorem \ref{THSP}.

\begin{proof}[\bf Proof of Theorem \ref{THSP}] Let $\Omega$ be the region enclosed by $\partial^{e}(\Pi(\body))$. Let $L=D(\Omega)$. Then use (\ref{MAINE}) with $\alpha$ defined by (\ref{FF}), and then use (\ref{FB3}) to deduce 
\be
8\pi\, |J(\body)|\leq \frac{R^{2}}{2\pi}\bigg(2\pi +\frac{2P}{D}\bigg)=\bigg(1+\frac{P}{\pi D}\bigg)R^{2}
\ee
as wished. 
\end{proof}
 
To prove Theorem \ref{ASPP} we will use that with the trial function
\be\label{FF4}
\alpha(p)=
\left\{
\begin{array}{lcl}
1 - \frac{\displaystyle \dist_{\gamma}(p,\Pi(C))}{\displaystyle L} & {\rm if} & \dist_{g}(p,\Pi(C))<L,\vs\\
0 & {\rm if} & \dist_{\gamma}(p,\Pi(C))\geq L
\end{array}
\right.
\ee 
where $L<\dist_{g}(C,{\rm Axis})=\dist_{\gamma}(\Pi(C),\partial \Sigma)$, we have
\be\label{CINE}
\int_{\Sigma}\big[|\nabla\alpha|^{2}+\kappa\alpha^{2}\big]\, dA\leq 2\pi - \frac{A}{L^{2}}
\ee 
where we recall that $A=A(\Omega_{L})$. This inequality is obtained easily as a limit case of the inequality (\ref{FB3}) when $\Omega$ reduces to a point. It can also be obtained from Lemma 1.8 in Castillon's \cite{MR2225628}. Indeed, choosing $\xi(r)=1-r/L$ in Lemma 1.8 gives us the bound $\int \kappa\alpha^{2}dA\leq 2\pi-2A/L^{2}$, while because $|\nabla \alpha|=1/L$ we get $\int |\nabla\alpha|^{2}dA=A/L^{2}$.

\begin{proof}[\bf Proof of Theorem \ref{ASPP}] Take $\alpha$ equal to the function (\ref{FF4}) with $L=d$, times the constant $1/(1-a/d)$. 
As $\body\subset \torus(C,a)$ then $\alpha\geq 1$ on $\Pi(\body)$. We use then (\ref{MAINE}) together with the fact that with such $\alpha$ we have (use (\ref{FF4})),
\be
\int_{\Sigma}\big[|\nabla\alpha|^{2}+\kappa\alpha^{2}\big]\, dA\leq \frac{2\pi}{(1-a/d)^{2}}
\ee
to obtain 
\be
8\pi\, |J|\leq \bigg(\frac{1}{1-a/d}\bigg)^{2}R^{2}
\ee
as wished.
\end{proof}

\subsection{Proof of the related results.}

\begin{proof}[\bf Proof of Theorem \ref{ISESTT}.] First, from the definition of the Kommar angular momentum we have
\be
J(\body)=\frac{1}{8\pi} \int_{\ssurf} K(\zeta,\xi)\, dA
\ee
where $\zeta$ is a normal to $\ssurf$ inside $\Sigma$. Then, by Cauchy-Schwarz we obtain, (make $|J|=|J(\body)|$),
\be
|J|\leq \frac{\sqrt{A}}{8\pi}\bigg[\int_{\ssurf} |K|^{2}_{g} |\lambda|^{2}\, dA\bigg]^{1/2}
\ee 
But, $|\lambda|\leq R/(2\pi)$ and, by the energy constraint ${\rm R}_{g} = |K|_{g}$, ($\ssurf$ is in vacuum), where ${\rm R}_{g}$ is the scalar curvature of $g$.
Thus,
\be\label{PEQ}
|J|\leq \frac{\sqrt{A}\, R}{16\pi^{2}}\bigg[\int_{\ssurf} {\rm R}_{g}\, dA\bigg]^{1/2}
\ee 
Finally, as shown by Christodoulou and Yau in \cite{MR954405}, we have\footnote{There seems to be a factor of $2$ missing in the denominator of the r.h.s of equation (5) in \cite{MR954405}.} $\int_{S} {\rm R}_{g}\, dA\leq 24\pi$. Using this in (\ref{PEQ}) we get
\be
|J|\leq \frac{6^{1/2}}{8\pi^{3/2}}\, \sqrt{A}\, R
\ee
This is the first inequality in (\ref{ISEST}). To obtain the second inequality as well we need to prove that the area $A$ of $\ssurf$ is less or equal than $L^{2}$ where $L$ the distance from the north to the south poles of $\ssurf$. This is proved as follows.

For any $L_{1}\in (0,L)$ define $L_{2}=L-L_{1}$. Let $\mathcal{P}_{1}$ and $\mathcal{P}_{2}$ be the poles of $\ssurf$. Then, given $L_{1}$ define a function $\alpha$ by
\be\label{ALPHS}
\alpha(p)=\left\{
\begin{array}{lcl}
1-\frac{\displaystyle \dist_{\gamma}(p,\mathcal{P}_{1})}{\displaystyle L_{1}} & {\rm if} & \dist_{\gamma}(p,\mathcal{P}_{1})\leq L_{1},\vs \\
-1+\frac{\displaystyle \dist_{\gamma}(p,\mathcal{P}_{2})}{\displaystyle L_{2}} & {\rm if} & \dist_{\gamma}(p,\mathcal{P}_{2})\leq L_{2}
\end{array}
\right.
\ee
Note that the function $\alpha$ takes positive and negative values. It is clear too that for some $L_{1}$ in $(0,L)$ the integral of $\alpha$ on $\ssurf$ is zero. Denote such $L_{1}$ by $L^{*}_{1}$ and write $L_{2}^{*}=L-L_{1}^{*}$. The function (\ref{ALPHS}) for these values of $L_{1}$ and $L_{2}$ is denoted by $\alpha^{*}$. 

Now, the stability inequality for stable isoperimetric surfaces implies
\be
\int_{\ssurf} \big[ |\nabla \alpha^{*}|^{2}+\kappa\alpha^{*2}\big ]\, dA\geq 0
\ee
Using twice (\ref{CINE}), once for the integral on the domain $\{p: \dist_{\gamma}(p,\mathcal{P}_{1})\leq L^{*}_{1}\}$ and a second time for the domain $\{p: \dist_{\gamma}(p,\mathcal{P}_{2})\leq L^{*}_{2}\}$, we can bound the integral on the l.h.s of the previous equation by $(2\pi -A^{*}_{1}/L_{1}^{*2})+(2\pi-A_{2}^{*}/L^{*2}_{2})$, where $A^{*}_{i}$, $i=1,2$, are the areas of the domains $\{p: \dist_{\gamma}(p,\mathcal{P}_{i})\leq L^{*}_{i}\}$, $i=1,2$. Hence,
\be
4\pi -\frac{A^{*}_{1}}{L_{1}^{*2}}-\frac{A_{2}^{*}}{L_{2}^{*2}}\geq 0
\ee
and therefore,
\be
4\pi \geq \frac{A^{*}_{1}}{L_{1}^{*2}}+\frac{A_{2}^{*}}{L_{2}^{*2}}\geq \frac{A^{*}_{1}+A^{*}_{2}}{\max\{L^{*2}_{1},L_{2}^{*2}\}}\geq \frac{A}{L^{2}}
\ee
as wished.\end{proof}

\begin{proof}[\bf Proof of Theorem \ref{QATHM}.] For the proof we will use the following property of the Hawking energy on spherically symmetric spacetimes, (see for instance \cite{Bray:2006pz}). 

Let $({\bf M}; {\bf g}; {\bf T})$ be a spherically symmetric spacetime where it is assumed that ${\bf T}$ satisfies the dominant energy condition. Let $\varphi:\mathbb{S}^{2}\times [0,1]\rightarrow {\bf M}$ be a spacelike embedding for which every $\varphi(\mathbb{S}\times \{x\})$ is a rotationally invariant sphere. Define $\zeta=d\varphi(\partial_{x})/|d\varphi(\partial_{x})|$, ($x$ is the coordinate on $[0,1]$), and consider a unit-timelike vector $n$ normal to the image of $\varphi$ in ${\bf M}$. In this setup we have the following: If over every sphere $\varphi(\mathbb{S}^{2}\times \{x\})$, the null expansions $\theta^{+}$ and $\theta^{-}$ along the null vectors $l^{+}=\zeta+n$ and $l^{-}=\zeta-n$ respectively, are positive, then the Hawking energy at $\varphi(\mathbb{S}\times \{1\})$ is greater or equal than the Hawking energy at $\varphi(\mathbb{S}\times \{0\})$. Recall that the Hawking energy $H(\ssurf)$ of a rotationally symmetric  sphere $S$ is
\be
H(\ssurf)=\frac{\sqrt{A}}{16\pi}\bigg(1-\frac{1}{16\pi}\int_{\ssurf} \theta^{+}\theta^{-} dA\bigg)=\frac{\sqrt{A}}{16\pi}\bigg(1-\frac{\theta^{+}\theta^{-} A}{16\pi}\bigg)
\ee

We proceed with the proof of the Theorem \ref{QATHM}. Suppose first that $0<M<|Q|$, \footnote{By the positive energy theorem we always have $M>0$.}. Then, because of the spherical symmetry, the exterior of $\body$ in $\hyper$ is modelled as a slice of the Reissner-Nordstr\"om superextreme spacetime which, recall, has the metric
\be\label{RNM}
{\bf g}_{\rm RN}=-\bigg(1-\frac{2M}{r}+\frac{Q^{2}}{r^{2}}\bigg)dt^{2} +\frac{1}{\displaystyle \bigg(1-\frac{2M}{r}+\frac{Q^{2}}{r^{2}}\bigg)}dr^{2} +r^{2}d\Omega^{2}
\ee
on the range of coordinates $r\in (0,\infty)$, $t\in (-\infty,\infty)$, $\theta\in [0,\pi]$ and $\varphi\in (0,2\pi]$.
A simple computation then shows that $\theta^{+}$ and $\theta^{-}$ at $\partial \body$ are both positive, (this will be crucial below), and that, the Hawking energy $H(\partial \body)$ at $\partial \body$ is
\be\label{HWM}
H(\partial \body)=M-\frac{\sqrt{\pi} Q^{2}}{\sqrt{A(\partial \body)}}.
\ee

Now, if $\theta^{+}\neq 0$ and $\theta^{-}\neq 0$ at every rotationally invariant sphere in $\body$, then $\theta^{+}>0$ and $\theta^{-}>0$ at each one of them. Hence, we can use the property explained above to conclude that the Hawking energy at $\partial \body$ must be greater or equal than the Hawking energy at the origin of $\body$ which is zero, (think it as a degenerate sphere). Thus, $H(\partial \body)\geq 0$ in (\ref{HWM}), and (\ref{MQA}) then follows. 

If instead there is a rotationally invariant sphere in $\body$ having either $\theta^{+}=0$ or $\theta^{-}=0$, then, again by the same property explained above, the Hawking energy at $\partial \body$ must be greater or equal than the Hawking energy of the rotationally symmetric sphere in $\body$ which is closest to $\partial \body$, and which has either $\theta^{+}=0$ or $\theta^{-}=0$, \footnote{Again note that at each rotationally symmetric sphere between this last one and $\partial \body$ we have $\theta^{+}>0$ and $\theta^{-}>0$.}.  But the Hawking energy of this last sphere is positive because one of its null expansions is zero. Therefore $H(\partial\body)>0$, and (\ref{MQA}) follows also in this case. 

Let us assume now that $M\geq |Q|$. Let $r_{\partial \body}=\sqrt{A(\partial \body)/4\pi}$ be the areal-coordinate at $\partial \body$. If 
\be\label{APEN}
r_{\partial \body}\geq M-\sqrt{M^{2}-Q^{2}}
\ee
then we are done because $M-\sqrt{M^{2}-Q^{2}}>Q^{2}/2M$ which together with (\ref{APEN}) implies (\ref{MQA}). If not, then $r_{\partial \body}< M-\sqrt{M^{2}-Q^{2}}$, that is, the areal-coordinate $r_{\partial \body}$ is less than the smaller root of the polynomial $r^{2}-2Mr+Q^{2}$. For this reason, a small neighbourhood of $\partial \body$ in $\hyper\setminus {\rm Int}(\body)$, (that is, in the exterior of the body), can be modelled as a slice of the piece of the Reissner-Nordstr\"om spacetime given by the metric (\ref{RNM}) in the range of coordinates $r\in (0,M-\sqrt{M^{2}-Q^{2}})$, $t\in (-\infty,\infty)$, $\theta\in [0,\pi]$ and $\varphi\in (0,2\pi]$. But then the null expansions $\theta^{+}$ and $\theta^{-}$ at $\partial \body$ must be again positive\footnote{If they are both negative, (which is the only other option), then the slice $\hyper$ outside $\body$ must reach the singularity at $r=0$.} and we can repeat exactly the same argument as we did for the case $M<|Q|$. \end{proof}

\section{Acknowledgment.} I would like to thank Sergio Dain for important conversations and to the many colleagues of FaMAF (Argentina) where these results were first discussed.  

\bibliographystyle{plain}
\bibliography{Master}

\end{document}